\documentclass{math-ppn}

\usepackage[T1]{fontenc}
\usepackage[utf8]{inputenc}
\usepackage{amsmath,amsfonts,amsthm,dsfont,amssymb,amsxtra,bbm}
\usepackage{setspace,graphicx,color}
\usepackage[colorlinks=true]{hyperref}
\hypersetup{urlcolor=blue, linkcolor=blue, citecolor=red, anchorcolor=blue]}

\newtheorem{theorem}{Theorem}

\newtheorem{lemma}[theorem]{Lemma}

\newcommand{\Proba}{{\mathbb P}}
\newcommand{\E}{{\mathbb E}}

\newcommand{\be}[1]{\begin{equation}\label{#1}}
\newcommand{\ee}{\end{equation}}
\renewcommand{\(}{\left(}
\renewcommand{\)}{\right)}

\newcommand{\s}{\mathsf s}
\newcommand{\e}{\mathsf e}
\renewcommand{\i}{\mathsf i}
\renewcommand{\r}{\mathsf r}

\newcommand{\Email}[2]{E-mail~(#1):~\href{mailto:#2}{\texttt{#2}}}

\begin{document}

\title{Heterogeneous social interactions and the COVID-19 lockdown outcome in a multi-group SEIR model}

\author{Jean Dolbeault}
\address{CEREMADE (CNRS UMR n$^\circ$ 7534), PSL university, Universit\'e Paris-Dauphine, Place de Lattre de Tassigny, 75775 Paris~16, France. \Email{J.D}{dolbeaul@ceremade.dauphine.fr}. \Email{G.T.}{turinici@ceremade.dauphine.fr}.\medskip}

\author{Gabriel Turinici$^1$}

\begin{abstract} We study variants of the SEIR model for interpreting some qualitative features of the statistics of the Covid-19 epidemic in France. Standard SEIR models distinguish essentially two regimes: either the disease is controlled and the number of infected people rapidly decreases, or the disease spreads and contaminates a significant fraction of the population until herd immunity is achieved. After lockdown, at first sight it seems that social distancing is not enough to control the outbreak. We discuss here a possible explanation, namely that the lockdown is creating social heterogeneity: even if a large majority of the population complies with the lockdown rules, a small fraction of the population still has to maintain a normal or high level of social interactions, such as health workers, providers of essential services, \emph{etc}. This results in an apparent high level of epidemic propagation as measured through re-estimations of the basic reproduction ratio. However, these measures are limited to averages, while variance inside the population plays an essential role on the peak and the size of the epidemic outbreak and tends to lower these two indicators. We provide theoretical and numerical results to sustain such a view. \end{abstract}

\begin{resume} Nous étudions des variantes du modèle SEIR afin d'interpréter certaines caractéristiques qualitatives des statistiques de l'épidémie de Covid-19 en France. Les modèles SEIR standards distinguent deux régimes: soit la maladie est contrôlée et le nombre de personnes infectées diminue rapidement, soit la maladie se propage et contamine une fraction importante de la population jusqu'à ce que l'immunité collective soit atteinte. Après le confinement, il semble à première vue que la distanciation sociale soit insuffisante pour contrôler l'épidémie. Nous avançons ici une explication possible, à savoir que le confinement crée de l'hétérogénéité sociale: même si une grande majorité de la population obéit aux règles de confinement, une petite fraction doit continuer à maintenir un niveau normal ou élevé d'interactions sociales, comme les personnels médicaux, les prestataires de services essentiels, \emph{etc}. Cela se traduit par un niveau de propagation élevé de l'épidémie, mesuré par des ré-estimations du taux de reproduction de base. Ces mesures se limitent toutefois à des moyennes alors que la variance au sein de la population joue un rôle essentiel sur le pic et la taille de l'épidémie et contribue à abaisser ces deux indicateurs. Nous apportons des arguments théoriques et numériques pour développer ce point de vue. \end{resume}

\subjclass{92C60, 92D30, 34C60}
%
%

\keywords{Epidemic models; Disease control; Heterogeneous populations; Basic reproduction ratio; Equilibrium solutions
}

\date{\today}

\maketitle

\section*{Introduction}

Although widely used in practice, compartmental models of epidemic spread (as for instance the celebrated SIR\cite{zbMATH02582630} model) rely on various simplifying assumptions, like a limited number of compartments, the homogeneity of the population inside a compartment, or some well defined parameters. The advantage of such models is that the impact of the variation of a single parameter on various qualitative properties can easily be studied, but the risk lies in an outrageous simplification of the representation of a complex system, with the additional difficulty that some parameters can be difficult to quantify from the available statistical data. Here we numerically study how the separation of the population into two sub-groups of individuals with different intensities of social interaction can qualitatively explain some observed features of the current pandemic of Covid-19, and provide some theoretical explanations which also apply to more realistic models.

New characteristics of the pandemic of Covid-19 are unveiled every day and reveal various interesting features. One of the issues is that curves showing the number of new cases in European countries stabilize very slowly after the beginning of the lockdown. Many questions have been raised on the methods for collecting the data and on their quality, that we do not address, not to mention variations between countries that deserve further investigations. As a simple explanation, it is has been suggested that the main reason is that a significant fraction of the population does not respect lockdown. This ignores the nonlinear properties of simple epidemic models. We show here that, if a rather small fraction of the population cannot reduce its social interactions, eventually for very good reasons as, \emph{e.g.}, the health workers or some other key actors of our societies, the epidemic keeps spreading, stabilizes at a much slower rate than what one would expect, and finally affects a significant fraction of the population.

In this paper we shall use the SEIR model rather than the simpler SIR model because of the significant period of incubation in the Covid-19 disease. The SEIR model has various properties that can be understood, at least for the order of magnitude of its effects, using simple objects like the well known \emph{basic reproduction ratio} $\mathcal R_0$. However, it also has very nonlinear features which are defying common sense and require rigorous mathematical analysis. In the accompanying numerical examples, we take values for the parameters which are compatible with the data collected during the pandemic of Covid-19. These examples are intended to understand qualitative features of the epidemic but cannot and should not be implemented for direct predictive use, as our model is too crude and oversimplified to reflect the Covid-19 epidemic in a quantitative manner.

We start by reviewing known results concerning the SEIR model applied to a homogeneous population in Section~\ref{Sec:SEIR1G}. A model with a large majority of the population under lockdown and a small minority which does not or cannot implement the \emph{social distancing} is then considered. The \emph{factor of reduction of social interactions}~$q$ quantifies the effect of the lockdown. It is certainly very difficult to measure this parameter in real life applications and structured population models should be considered, in order to reflect the variety of social interactions. Our goal is to understand how, by varying $q$, one eventually triggers the nonlinearity of the SEIR model and, as a consequence, how this drives the system into a \emph{stable equilibrium} which is far from the usual target of a lockdown policy, that is, the control of the disease, but is also far from the dynamics that would develop in absence of lockdown.

\section{Basic mathematical properties of the SEIR model}\label{Sec:SEIR1G}

\subsection{The SEIR model}

Let us consider the SEIR (\emph{Susceptible}, \emph{Exposed}, \emph{Infected}, \emph{Recovered}) model~\cite{anderson1992infectious} defined by the set of equations
\begin{align}
&\frac{dS}{dt}=-\,\beta\,S\,\frac IN\,,\label{S}\\
&\frac{dE}{dt}=\beta\,S\,\frac IN-\,\alpha\,E\,,\label{E}\\
&\frac{dI}{dt}=\alpha\,E-\,\gamma\,I\,,\label{I}\\
&\frac{dR}{dt}=\gamma\,I\,,\label{R}
\end{align}
which is a variant of the SIR model~\cite{zbMATH02582630} of Kermack and McKendrick. Here we neglect birth and death rates, \emph{i.e.}, we consider a model without vital dynamics. The \emph{average incubation period} is $1/\alpha$, the parameter $\beta$ is the product of the average number of contacts per person and per unit time by the probability of disease transmission in a contact between a susceptible and an infectious individual, $\gamma$ is a \emph{transition rate} so that $1/\gamma$ measures the duration of the infection of an individual and $N$ is the total population size. In the Covid-19 pandemy, the average incubation period is of several days and this is why an SEIR model has to be preferred to a simple SIR model. Many qualitative features are the same in the two models but the compartment $E$ of exposed individuals makes the analysis slightly more delicate. Unreported cases or asymptomatic individuals are not taken into account here: this is an important aspect of the Covid-19 epidemic, see for instance~\cite{danchin_new_2020,LiuMagal_2020,magal2020.03.21.20040154}, with important consequences on the epidemic size, but probably not so much on the qualitative issues that are discussed in this paper. Other factors, like delays for the transmission of the information studied in~\cite{Buonomo_2008}, certainly also play a role in the current outbreak.

The SEIR model is a \emph{compartmental model} used to understand the mathematical modelling of infectious diseases in a large population, with enough individuals in each compartment so that stochastic effects can be neglected. Spatial effects are also neglected, which is a rather crude approximation. However, such a simplified model allows us to perform an analysis of the sensibility to the parameters which are of interest for a qualitative description of the outbreak of an epidemic disease.

The system~\eqref{S}-\eqref{R} is homogeneous so that we can simply consider the \emph{fractions}
\[
\s:=\frac SN\,,\quad\e:=\frac EN\,,\quad\i:=\frac IN\,,\quad\r:=\frac RN
\]
of the \emph{Susceptible}, \emph{Exposed}, \emph{Infected} and \emph{Recovered} individuals among the whole population.

\subsection{Conservations and large time asymptotics}

Here we perform a simple analysis of the model, which is done in the spirit of~\cite{MR1814049}. By conservation of the total number of individuals, we have the relation
\be{1}
\s(t)+\e(t)+\i(t)+\r(t)=1
\ee
for any time $t\ge0$, which is easily proved by summing the equations in the system, while the evolution is now governed by the system
\begin{align}
&\frac{d\s}{dt}=-\,\beta\,\s\,\i\,,\label{s}\\
&\frac{d\e}{dt}=\beta\,\s\,\i-\,\alpha\,\e\,,\label{e}\\
&\frac{d\i}{dt}=\alpha\,\e-\,\gamma\,\i\,,\label{i}\\
&\frac{d\r}{dt}=\gamma\,\i\,.\label{r}
\end{align}
An important and classical observation is that the domain
\[
\s\ge0\,,\quad\e\ge0\,,\quad\i\ge0\,,\quad\s+\e+\i\le1
\]
is stable under the action of the flow~\eqref{s}-\eqref{r}, and it is straightforward to check that
\be{si}
\frac d{dt}\(\gamma\,\log\s-\beta\,(\s+\e+\i)\)=0\,.
\ee
Since $\s$ is monotone non-increasing by~\eqref{s}, this means that~\eqref{s}-\eqref{r} has a global solution for any $t\ge0$ and
\[
\lim_{t\to+\infty}\s(t)=\s^\star>0\,.
\]
By an elementary analysis, we find that any solution $(\s,\e,\i,\r)$ of~\eqref{s}-\eqref{r} with initial data $(\s_0,\e_0,\i_0,\r_0)$ such that $\i_0+\e_0>0$ converges as $t\to+\infty$ to a stationary solution $(\s^\star,\e^\star,\i^\star,\r^\star)$ with
\be{star}
\e^\star=\i^\star=0\,,\quad\s^\star+\r^\star=1\quad\mbox{and}\quad\gamma\,\log\s^\star-\beta\,\s^\star=\gamma\,\log\s_0-\beta\,(\s_0+\e_0+\i_0)\,,
\ee
according to~\eqref{1} and~\eqref{si}. Note that the solution is unique if $\gamma-\beta>0$ but there are two solutions if $\gamma-\beta<0$ and $\gamma\,\log\s_0+\beta\,\r_0=\gamma\,\log\s_0-\beta\,(\s_0+\e_0+\i_0)+\,\beta>0$. In our numerical applications, we shall assume that $\gamma\,\log\s_0+\beta\,\r_0=\gamma\,\log\s_0-\beta\,(\s_0+\e_0+\i_0)+\,\beta<0$ if $\gamma-\beta<0$, so that $\s^\star$ is uniquely defined. 

The \emph{epidemic size} $\zeta$ is defined as the fraction of individuals that are affected by the epidemic, here $\s_0-\s^\star$. Next, let us consider more specifically the case of an initial datum which is a perturbation of the constant in time \emph{DFE solution}, or \emph{Disease Free Equilibrium} (see~\cite{van_den_Driessche_2002}), corresponding to
\be{DFE1}
(\s_0,\e_0,\i_0,\r_0)=(1,0,0,0)\,.
\ee

\subsection{Stable equilibrium, epidemic size and phase transition}\label{Sec:Equilibrium1}

Let us assume that $\e_0+\i_0+\r_0=\varepsilon>0$ is small and $\r_0=\vartheta\,\varepsilon$ for some given $\vartheta\in[0,1)$. This initial condition, in the limit as $\varepsilon\to0_+$, is a perturbation of the DFE solution. For any $\varepsilon>0$, we know by~\eqref{1} that $\s_0=1-\varepsilon$. The solution of~\eqref{s}-\eqref{r} converges as $t\to+\infty$ to $(\s^\star,0,0,\r^\star=1-\s^\star)$, the stationary solution $(\s^\star,0,0,\r^\star)$ is stable and this is why we call it the \emph{stable equilibrium} solution. See~\cite[Section~7.3]{MR3409181} for a discussion of the stability. In a model with birth and death rates, the solution is usually called the \emph{endemic equilibrium}, but we shall prefer to call it simply the \emph{stable equilibrium} solution as we neglect birth and death issues. The following discussion is given here in preparation for the next two sections. We do not claim originality and refer for instance to~\cite{MR1814049} for a detailed study motivated by very explicit examples. Depending whether the DFE solution is stable or not, we have three regimes for $\s^\star=\s^\star(\varepsilon)$:
\\[8pt]
$\rhd$ \textbf{\textit{Control of the epidemic.}} If $\gamma>\beta$, we find that
\be{Control}
\s^\star(\varepsilon)=\s_0-\frac{\beta\,\varepsilon}{\gamma-\beta}+o(\varepsilon)=1-\frac{\gamma\,\varepsilon}{\gamma-\beta}+o(\varepsilon)
\ee
as $\varepsilon\to0_+$. In other words, the stable equilibrium is a perturbation of the DFE solution and the \emph{epidemic size}~$\zeta$ is of order $\varepsilon$.
\\[8pt]
$\rhd$ \textbf{\textit{Epidemic spreading and herd immunity.}} If $\gamma<\beta$, then we note that $s\mapsto\gamma\,\log s-\beta\,s$ achieves a maximum point on $(0,1)$ at $s=\gamma/\beta$ and, as a consequence, that
\be{stareps}
\s^\star(\varepsilon)<\frac\gamma\beta
\ee
even for an arbitrarily small value of $\varepsilon>0$. Hence we find that $\s^\star(\varepsilon)$ is not only a perturbation of order $\varepsilon$ of~$\s_0$ but also that the value of $\s^\star(\varepsilon)$ is of order $\s^\star(0)$, the unique root in $(0,1)$ of $\gamma\,\log s+\beta\,(1-s)=0$. Since the constant solution $\big(\s^\star(\varepsilon),0,0,1-\s^\star(\varepsilon)\big)$ is a stable equilibrium, \emph{herd immunity} is always granted in the sense that no outbreak can occur. Let $\mathcal R_0=\beta/\gamma$.

With $\mathcal R_0>1$, the epidemic spreads in the sense that $\s^\star(0)$ is the solution~of
\be{stareq}
\log s-\mathcal R_0\,(s-1)=0\,.
\ee
This means that $\s^\star(0)=-\frac1{\mathcal R_0}\,W\(-\mathcal R_0\,e^{-\mathcal R_0}\)$ where the function $W$ is known as the \emph{Lambert function} (see for instance~\cite{Corless_1996}) and defined as the inverse of $w\mapsto w\,e^w$. Note that $r=1-s$ is given for $s=\s^\star(0)$ by the equation
\be{req}
e^{-\mathcal R_0\,r}+r=e^{-\frac{\beta\,r}\gamma}+r=1
\ee
and that $r=1-\s^\star(\varepsilon)$ solves
\be{reqeps}
\s_0\,e^{-\frac\beta\gamma\,(r-\r_0)}+r=1\,,\quad\s_0=1-\varepsilon\,,
\ee
so that $\s^\star(\varepsilon)$ also depends on $\vartheta$ (but this dependence disappears in the limit as $\varepsilon\to0_+$). Altogether, if $\gamma<\beta$, that is, for $\mathcal R_0>1$, the \emph{epidemic size} $\zeta$ is of order $1$ as $\varepsilon\to0_+$. See Fig.~\ref{f3} for an illustration of the dependence of $\zeta$ in $\mathcal R_0$.
\\[8pt]
$\rhd$ In the \textbf{\textit{threshold case}} $\beta=\gamma$, \emph{i.e.}, $\mathcal R_0=1$, which is typical of a phase transition, we have to solve
\be{sstardev}
\log\s^\star(\varepsilon)-\s^\star(\varepsilon)=\log(1-\varepsilon)+\r_0-1
\ee
and find that
\be{sstardev2}
\s^\star(\varepsilon)=1-\sqrt{2\,(1-\vartheta)\,\varepsilon}+o\big(\sqrt\varepsilon\big)
\ee
as $\varepsilon\to0_+$. We recall that $0\le\r_0<\varepsilon$ and observe that the \emph{epidemic size} $\zeta$ is of order $\sqrt\varepsilon$ as $\varepsilon\to0_+$. Scale invariance is reflected by the fact that there is no dependence neither on $\beta$ nor on $\gamma$.

Summarizing, there are two phases and a threshold case corresponding to the \emph{phase transition}. If $\gamma>\beta$, the epidemic size $\zeta$ is close to zero. The disease does not spread in the population and simply vanishes exponentially fast. On the opposite, if $\gamma<\beta$, the \emph{Disease Free Equilibrium} is unstable, the diseases quickly spreads with an exponential growth, the system converges for large times to a stable equilibrium far away from the DFE solution and the epidemic size $\zeta$ is a significant fraction of the total population, irrespective of how small the fraction of initially infected individuals~is. Whether $\mathcal R_0>1$ or $\mathcal R_0<1$ determines the asymptotic stable equilibrium starting from the DFE solution. In the literature, $\mathcal R_0$ is called the \emph{basic reproduction ratio}, or the \emph{basic reproduction number}. So far we simply consider it here as the \emph{order parameter} of the phase transition, in the usual sense in physics: see for instance~\cite[p.~449]{landau2013statistical} or~\cite[p.~3]{stanley1971introduction}. In the next section, we shall explain the role it plays in the initial dynamics of the model and refer to Section~\ref{sec:roliterature} for more considerations on the epidemiologic interpretation of $\mathcal R_0$.

Note that the SIR model has exactly the same stationary states and the same phase transition as the SEIR model, as the SIR model is obtained by replacing~\eqref{e} by $\beta\,\s\,\i=\alpha\,\e$, so that, in the SIR model, the equation for~$\i$ becomes
\be{sir}
\frac{d\i}{dt}=\beta\,\s\,\i-\,\gamma\,\i\,.
\ee
The order parameter is also $\mathcal R_0=\beta/\gamma$ and whether $\mathcal R_0>1$ or $\mathcal R_0<1$ determines if the epidemic is spreading or if the disease is controlled.

\subsection{Linearization and the basic reproduction ratio}

In the case of the SIR model,~\eqref{sir} can be rewritten as
\be{sir2}
\frac{d\i}{dt}=\(\beta\,s-\,\gamma\)\i=\big(\mathcal R_0\,\s-1\big)\,\gamma\,\i
\ee
and it is elementary to observe that when $\s\sim1$, whether $\mathcal R_0>1$ or $\mathcal R_0<1$ determines the initial dynamics of the model: the interpretation of $\mathcal R_0$ is clear from the above equation. After this digression on the SIR model, let us come back to the SEIR model. Understanding the role of $\mathcal R_0$ is a little bit more subtle than in the SIR model. We recall that $\r(t)=1-\s(t)-\e(t)-\i(t)$ plays no role in the stability analysis of the DFE solution. At any time $t$, the linearized dynamics of $t\mapsto\big(\s(t),\e(t),\i(t)\big)$ 
is described by the matrix
\begin{equation}
\mathcal M(\s,\i):=\(\begin{array}{ccc}
-\,\beta\,\i&0&-\,\beta\,\s\\
\beta\,\i&-\,\alpha&\beta\,\s\\
0&\alpha&-\,\gamma
\end{array}\)
\end{equation}
and we may notice that the largest eigenvalue of $\mathcal M(1,0)$, corresponding to the linearization around the DFE solution, is
\begin{equation}
\lambda(\alpha,\beta,\gamma):=\frac12\(\sqrt{(\alpha-\gamma)^2+4\,\alpha\,\beta}-\alpha-\gamma\),
\end{equation}
so that $\lambda(\alpha,\beta,\gamma)$ is positive if and only if $\mathcal R_0=\beta/\gamma>1$. Moreover, the eigenspace is compatible with the nonlinear dynamics so that there are perturbations of the DFE solution which are exponentially growing with a rate $\lambda(\alpha,\beta,\gamma)$ if and only if $\mathcal R_0>1$. However, the basic reproduction ratio $\mathcal R_0$ is not anymore directly connected with the linearized growth mode of $\i$. In fact, we can observe that~\eqref{sir2} is replaced by
\be{R0e+i}
\frac d{dt}(\e+\i)=\(\beta\,s-\,\gamma\)\i\sim\big(\mathcal R_0-1\big)\,\gamma\,\i
\ee
if $\s\sim1$, which is indeed the correct way of estimating the growth of the epidemic. We will come back on the interpretation of $\mathcal R_0$ in Section~\ref{sec:roliterature} and note that $(\e+\i)$ is known in the literature as the population in the \emph{infectious compartments}.

\subsection{Social distancing and the factor of reduction of social interactions}

The goal of a lockdown policy is to replace the system~\eqref{s}-\eqref{r} by
\begin{align}
&\frac{d\s}{dt}=-\,\frac\beta q\,\s\,\i\label{sq}\\
&\frac{d\e}{dt}=\frac\beta q\,\s\,\i-\,\alpha\,\e\label{eq}\\
&\frac{d\i}{dt}=\alpha\,\e-\,\gamma\,\i\label{iq}\\
&\frac{d\r}{dt}=\gamma\,\i\label{rq}
\end{align}
for some factor $q>1$ which measures the \emph{reduction of social interactions} of each individual. Of course, what we obtain is exactly~\eqref{s}-\eqref{r} with the parameter $\beta$ replaced by $\beta/q$. The point is that the basic reproduction ratio becomes
\be{R01}
\mathcal R_0^{(1)}(q)=\frac\beta{\gamma\,q}=\frac{\mathcal R_0}q
\ee
and the goal is either to fix $q$ to a value large enough so that the epidemic is controlled, that is $q>\mathcal R_0$, or at least to make $\mathcal R_0-q>0$ small in order to \emph{flatten the curve}, \emph{i.e.}, to have an epidemic going at slower pace. See~Fig.~\ref{f1-f2} for an illustration. Here the exponent $(1)$ in the notation $\mathcal R_0^{(1)}$ points to the assumption that we consider a population with a single group of susceptible individuals or, in other words, a socially homogeneous population.

In a SIR model with social distancing, with~\eqref{sq},~\eqref{rq} on the one hand, but~\eqref{eq} and~\eqref{iq} replaced by
\be{seirq}
\frac{d\i}{dt}=\frac\beta q\,\s\,\i-\,\gamma\,\i
\ee
on the other hand, it is possible to compute the epidemic peak. This is a classical result, see for instance~\cite[Section~2.1.2]{MR3409181}. The epidemic peak is defined as the maximum $\i(t_p)$ of $t\mapsto\i(t)$ after noticing that, at $t=t_p$, we have the system of equations
\be{sirpeak}
\frac\beta q\,\s\,\i-\,\gamma\,\i=0\,,\quad\gamma\,\log\(\frac\s{\s_0}\)+\frac\beta q\,(\r-\r_0)=0\,,\quad\s+\i+\r=1\,,
\ee
which provides us with the value
\be{imax-sir}
\i(t_p)=1-\r_0-\frac1{\mathcal R_0^{(1)}(q)}\,\Big(1+\log\(\mathcal R_0^{(1)}(q)\,\s_0\)\Big)
\ee
for any $q<1/\mathcal R_0$. Note that $\i(t_p)\sim1-\mathcal R_0^{(1)}(q)^{-1}\,\big(1+\log(\mathcal R_0^{(1)}(q))\big)$ as $(\s_0,\r_0)\to(1,0)$, \emph{i.e.}, in the limit of a DFE initial datum. We obtain the same expression in the SEIR model if we replace $\i$ by $\e+\i$. See~\cite{feng2007final} as a source of inspiration for such considerations, and also Property~(2) of Theorem~\ref{Thm:Main}, in a much more general framework.

\subsection{The basic reproduction ratio and the method of the next generation matrix} \label{sec:roliterature}

According to~\cite{MR1057044}, the basic reproduction ratio $\mathcal R_0$ is the \emph{expected number of secondary cases produced, in a completely susceptible population, by a typical infected individual during its entire period of infectiousness}. In~\cite{Blackwood_2018}, Blackwood and Childs provide us with a comprehensive introduction to the computation of $\mathcal R_0$ using the method of the \emph{next generation matrix} in the case of the SEIR model. Such a computation goes back to~\cite{MR1057044}, the standard method is exposed in~\cite{van_den_Driessche_2002} and we can also refer to~\cite{Diekmann_2009} for an application to the SEIR model. As for a more general presentation of the method in compartmental models and more formal mathematical treatments, one can refer to~\cite{diekmann2000mathematical,MR3409181}, and to~\cite{hethcote1987epidemiological} and~\cite[p.~17]{anderson1992infectious} for early considerations on endemic and stable equilibria. Alternative definitions of the basic reproduction ratio in compartmental models are also available: see~\cite{MR3409181} for an overview,~\cite{MR1814049} for a more historical account and~\cite{hethcote1987epidemiological,zbMATH00898403} for considerations on models which are more directly linked to our interests (see Section~\ref{Sec:SEIR2G}).

For sake of completeness, let us give a brief summary of the method of the \emph{next generation matrix}. First of all, one restricts the analysis to the \emph{infectious compartments}, $\mathbf x = (\e,\i)$ in case of~\eqref{sq}-\eqref{rq}, and consider the linearized evolution equation around the DFE solution, that is,
\be{eq:lin}
\frac{d\mathbf x}{dt}=\(\mathrm F-\mathrm V\)\,\mathbf x
\ee
where $\mathrm F$ and $\mathrm V$ respectively denote the matrices associated with the rate of new infections
and the rates of transfer between compartments, \emph{i.e.},
\be{FV}
\mathrm F=\(\begin{array}{cc}0&\frac\beta q\\ 0&0\end{array}\)\quad\mbox{and}\quad\mathrm V=\(\begin{array}{cc}-\,\alpha&0\\ \alpha&-\,\gamma\end{array}\)
\ee
See~\cite{van_den_Driessche_2002} for details. According to~\cite[Lemma 1]{van_den_Driessche_2002}, we observe that the matrix $\mathrm F$ is non-negative and the matrix~$\mathrm V$ is non-singular. In this framework, the \emph{basic reproduction ratio} is defined as the largest eigenvalue of $\mathrm F\,\mathrm V^{-1}$. It is an elementary computation to check that 
\be{mat:spectralradius}
\mathrm F\,\mathrm V^{-1}=\(\begin{array}{cc}\frac\beta{q\,\gamma}&\frac\beta{q\,\gamma}\\ 0&0\end{array}\)
\ee
has two eigenvalues, $0$ and $\frac\beta{q\,\gamma}=\mathcal R_0^{(1)}(q)$. This proves that $\mathcal R_0^{(1)}(q)$ is the basic reproduction ratio, as defined by the method of the next generation matrix. In this framework, it is known from~\cite[Theorem~2]{van_den_Driessche_2002} that the DFE solution is stable if $\mathcal R_0^{(1)}(q)<1$ and unstable if $\mathcal R_0^{(1)}(q)>1$.

\section{A heterogenous model of social distancing}\label{Sec:SEIR2G}

\subsection{A simple model with two groups}

Let us consider a population divided in two groups indexed by $k=1$, $2$, in which the \emph{Susceptible} individuals have a factor of reduction of social interactions $q_k$ which differ in the two groups. We shall assume that each of these groups gather a fixed fraction of the population $p_k$ with $p_2=p$ small and $p_1=1-p$. While the group corresponding to $k=1$ observes a lockdown and has a factor $q_k>1$, we are interested in the situation in which the other group has no reduction of social interactions: $q_2=1$, or eventually has more social interactions than average before lockdown, corresponding to some $q_2<1$. The typical example is the case of health workers in a period of epidemic disease or supermarket cashiers, who have contacts with a much larger number of people than an average individual. It is of course very difficult to estimate $q_2$ and one should take into account the efficiency of barrier procedures. Instead of trying to make rough guesses for the value of $q_2$, we will vary it in order to see what is the impact on the solutions.

\medskip With a straightforward notation, let us split the population of \emph{Susceptible} individuals in two groups
\[
\s=\s_1+\s_2
\]
and consider for $k=1$, $2$ the system
\be{Systq2}
\frac{\s_k'}{\s_k}=-\,\beta_k\,\i\quad\mbox{with}\quad\beta_k=\frac\beta{q_k}\,,\quad\e'=\(\beta_1\,\s_1+\beta_2\,\s_2\)\i-\,\alpha\,\e\,,\quad\i'=\alpha\,\e-\,\gamma\,\i\,,\quad\r'=\gamma\,\i\,.
\ee
There are multiple possible variants and it would make sense, for instance, to distinguish $\i_1$ and $\i_2$ in the above equations, with detailed contamination rules. The above system has striking properties. It is for instance straightforward to see that the linearized system around the DFE solution, \emph{i.e.}, the matrix $\mathcal M(1,0)$, has a largest eigenvalue given by
\be{lambda}
\lambda(\alpha,\beta,\gamma)=\frac12\(\sqrt{(\alpha-\gamma)^2+4\,\alpha\,\gamma\,\mathcal R_0^{(2)}(q_1,q_2,p)}-\alpha-\gamma\)\,,
\ee
where the basic reproduction ratio $\mathcal R$, as defined in the method of the \emph{next generation matrix}, is given by
\be{eq:formulamean2R0}
\mathcal R_0^{(2)}(q_1,q_2,p)=\frac{(1-p)\,\beta_1+p\,\beta_2}\gamma\,.
\ee
One can indeed apply the method of Section~\ref{sec:roliterature} and observe that the only change lies in the matrix $\mathrm F$, where the coefficient $\beta/q$ has to be replaced by $(1-p)\,\beta_1+p\,\beta_2$, which establishes~\eqref{eq:formulamean2R0}. It is easy to deduce from~\eqref{lambda} that $\lambda(\alpha,\beta,\gamma)$ is positive if and only if $\mathcal R_0^{(2)}(q_1,q_2,p)>1$.

\medskip Modeling heterogeneous mixing in infectious disease dynamics when the population is subdivided by characteristics other than those that are disease-related, such as risk status or age, is not new. This has been considered for instance from the dynamical point of view in~\cite{Lachiany_2016} or in the case of sexually transmitted diseases and particularly in HIV/AIDS models, with groups that are not all defined by disease related properties. In this perspective, \emph{contact matrices} have been considered, which involve a detailed analysis of the transmission mechanisms. We can refer to~\cite{May_1984,hethcote1987epidemiological,Hyman_1988,Jacquez_1988,Adler_1992} for various considerations in this direction and to~\cite[Section~3]{zbMATH00898403} for a discussion of the \emph{homogeneous mixing fallacy} in the application to successful vaccination policies. The present paper ignores a number of issues like symmetry in transmission between groups and density-dependent transmission questions in order to focus on simple qualitative questions: any serious study with quantitative goals should of course address these issues with care: see for instance~\cite{Blackwood_2018} for a warning. For sake of simplicity, we have chosen to consider that the origin of the infected individuals (group $1$ or $2$) plays no role in the transmission. This has the simple consequence that the basic reproduction ratio $\mathcal R$ is in the end exactly the \emph{average of the ratios independently computed for each group}, as shown by~\eqref{eq:formulamean2R0}. In the regime corresponding to $p$ much smaller than $1-p$ and whatever the details are, our model is anyway good enough to show that what matters concerning the basic reproduction ratio is the average of the ratios\footnote{~For instance, it is not the average of the factors of reduction of social interactions which matters, as it is usually done when considering homogeneous populations represented as a single group.}. However, when one considers the epidemic peak and the large time asymptotics, the message is not only the average and this is what we explain next.

\subsection{Conservations and large time asymptotics}

System~\eqref{Systq2} inherits of the properties of the standard SEIR model. The conservation of mass
\be{seir2:mass}
\s_1(t)+\s_2(t)+\e(t)+\i(t)+\r(t)=1\quad\forall\,t\ge0
\ee
guarantees that all quantities are bounded by $1$ as long as they are nonnegative. For a solution of~\eqref{Systq2}, let us observe that
\be{seir2:conservation}
\frac d{dt}\(\log\s_k\)=-\,\beta_k\,\i=-\,\frac{\beta_k}\gamma\,\r'\,,
\ee
so that
\be{s_k(t)}
\s_k(t)=\s_k^0\,e^{-\frac{\beta_k\,\r(t)}\gamma}\quad\mbox{with}\quad\s_k^0=\s_k(0)\,e^\frac{\beta_k\,\r(0)}\gamma\,.
\ee
We shall moreover assume that
\be{s_k(0)}
\s_k(0)=p_k\,\s(0)
\ee
as the population can be considered, in the initial phase of the outbreak (for $t\le0$, in our setting), as a single group. This point could be reconsidered and studied as in~\cite{bacaer:hal-02509142} if one is interested in the dynamics of the epidemics, but it has a no significant impact on the stable equilibrium in the uncontrolled case. An analysis of the trajectories of~\eqref{Systq2} as in Section~\ref{Sec:Equilibrium1} shows that solutions globally exist and that there is a unique stable attractor $(\s_1^\star,\s_2^\star,0,0,\r^\star)$. We deduce from the conservation of mass
\be{seir2:massasymp}
\s_1^\star+\s_2^\star+\r=1
\ee
that the stable stationary solution is given as the unique solution with $\r^\star=r>\r(0)$ of
\be{Eq:Equilibrium2}
1=\sum_{k=1,2}\s_k^0\,e^{-\frac{\beta_k\,r}\gamma}+r\,,\quad\s_k^\star=\s_k^0\,e^{-\frac{\beta_k\,r}\gamma}\,,\quad k=1,\,2\,.
\ee
Under the condition that $\big(\e_k(0),\i_k(0)\big)\neq(0,0)$, we have that
\be{seir2:solnlim}
\lim_{t\to+\infty}\big(\s_1(t),\s_2(t),\e(t),\i(t),\r(t)\big)=(\s_1^\star,\s_2^\star,0,0,\r^\star)\,.
\ee

\subsection{Stable equilibrium and phase transition}\label{Sec:Equilibrium2}

The same discussion as in Section~\ref{Sec:Equilibrium1} can be done. To fix ideas, let us assume for instance, as a simplifying assumption, that $\s(0)=1-\varepsilon$, $\e(0)+\i(0)+\r(0)=\varepsilon$, and $\r_0=\vartheta\,\varepsilon$ for some given $\vartheta\in[0,1)$, so that $\s_k^0=\s_k(0)=p_k\,(1-\varepsilon)$. The equation for the equilibrium~\eqref{Eq:Equilibrium2} can be rewritten as
\be{Eq:Equilibrium2bis}
(1-\varepsilon)\,\((1-p)\,e^{-\frac{\beta_1\,(\r-\vartheta\,\varepsilon)}\gamma}+p\,e^{-\frac{\lambda(\alpha\,(\r-\vartheta\,\varepsilon)}\gamma}\)+\r=1
\ee
and the Taylor expansion 
\be{Taylor}
(1-\varepsilon)\,\Big(1-\mathcal R_0^{(2)}(q_1,q_2,p)\,\big(1+o(\varepsilon)\big)\,\r\Big)+\r=1+o(\r)
\ee
for $\r>0$, small, shows that there is a solution with $\r>0$, small, as $\varepsilon\to0_+$ if and only if $\mathcal R_0^{(2)}(q_1,q_2,p)<1$, thus giving a solution of order $\varepsilon$, which turns out to be the unique solution. Otherwise, the only positive solution of~\eqref{Eq:Equilibrium2bis} corresponds to some $\r>>\varepsilon$ and we have exactly the same phase transition as in Section~\ref{Sec:Equilibrium1}, with $\mathcal R_0^{(2)}(q_1,q_2,p)$ playing the role of an order parameter. When the disease spreads, we find that $\r^\star$ is of the order of the solution of~\eqref{Eq:Equilibrium2bis} with $\varepsilon=0$, that is, of the solution $\r=r$ of
\be{seir2:Eq}
(1-p)\,e^{-\frac{\beta_1\,r}\gamma}+p\,e^{-\frac{\beta_2\,r}\gamma}+r=1\,.
\ee
By convexity, we know that
\be{seir2:Jensen}
(1-p)\,e^{-\frac{\beta\,r}{\gamma\,q_1}}+p\,e^{-\frac{\beta\,r}{\gamma\,q_2}}\ge e^{-\mathcal R_0^{(2)}(q_1,q_2,p)\,r}\,.
\ee
If we choose $q$ such that $\mathcal R_0^{(1)}(q)=\mathcal R_0^{(2)}(q_1,q_2,p)$, which means $\frac1q=\frac{1-p}{q_1}+\frac p{q_2}$, then it is clear that the solution of
\be{seir2:epidemicsize}
(1-p)\,e^{-\frac{\beta\,r}{\gamma\,q}}+r=1
\ee
is larger than $\r^\star$. In other words, the \emph{epidemic size} is reduced if we replace~\eqref{sq}-\eqref{rq} by~\eqref{Systq2} with $q$, $q_1$ and $q_2$ as above.

\section{General heterogeneous distribution}\label{Sec:Theoretical}

In this section we extend the previous framework to the situation of an arbitrary number of different population classes. Each class, or group, contains the individuals that share a given value of the transmission rate $\beta/q$. More precisely we consider a probability space $(\Omega, \mathcal{F}, \Proba)$ and $\mathfrak{S}(\omega,t)$, $\mathfrak{B(\omega)}$ random variables on this space designating respectively the state of a random individual $\omega \in \Omega$ and its $\beta/q$ parameter. We denote by $\E[\cdot]$ the average operator.

For instance the case of Section~\ref{Sec:SEIR2G} corresponds to the situation when $\mathfrak{B}$ has only two values, $\beta_1=\beta / q_1$ and $\beta_2=\beta / q_2 $, and $\s_k(t) = \Proba \left[ \mathfrak{S}(t)= \mbox{\emph{``Susceptible''}}, \mathfrak{B} = \beta/ q_k \right]$, with $k=1,2$. In order to keep notation simple, we will suppose in the following that the conditional law $\Proba[\mathfrak{B}| \mathfrak{S}(0)= \mbox{\emph{``Susceptible''}}\,]$ of $\mathfrak{B}$ relative to being susceptible (at $t=0$) only takes a finite number of values $\beta_1, ..., \beta_K$, \emph{i.e.}, is of the form $\sum_{k=1}^K p_k\,\delta_{\beta_k}$, with $p_k=\Proba[\mathfrak{B}= \beta_k| \mathfrak{S}(0)= \mbox{\emph{``Susceptible''}}\,] $, but all results given below extend to the general case. Note that $\sum_{k=1}^K p_k = 1$. We adopt the notation $\mathfrak{R} = \mathfrak{B}/ \gamma$, $R_k = \beta_k / \gamma$ for any $k=1,...,K$ and define 
\be{def:skeir}
\begin{array}{lll}
&\s_k(t)=\Proba\left[\mathfrak{S}(t)=\mbox{\emph{``Susceptible''}},\,\mathfrak{B}=\beta_k\right]\,,\quad
&\e(t)=\Proba\left[\mathfrak{S}(t)=\mbox{\emph{``Exposed''}}\,\right]\,,\\[4pt]
\quad&\i(t)=\Proba\left[\mathfrak{S}(t)=\mbox{\emph{``Infected''}}\,\right]\,,
\quad&\r(t)=\Proba\left[\mathfrak{S}(t)=\mbox{\emph{``Recovered''}}\,\right]\,,
\end{array}
\ee
and
\be{def:ssk}
\s(t) = \Proba \left[ \mathfrak{S}(t)= \mbox{\emph{``Susceptible''}}\,\right]=\sum_{k=1}^K\s_k(t)\,.
\ee
With these notations $p_k = \s_k(0) / \s(0)$.
The evolution of $\(\s_k,\e,\i,\r\)$ is governed by the system of equations\footnote{~The mathematically rigorous formulation of the equations involves defining a continuous time Markov chain for any individual and the associated infection and recovering probabilities, see~\cite{laguzet_individual_2015} for details.}:
\begin{equation}
\frac{d\s_k}{dt} = - \beta_k\,\s_k\,\i\,,\quad
\frac{d\e}{dt}= \(\,\sum_{k=1}^K\beta_k\,\s_k\)\i - \alpha\,\e\,,\quad
\frac{d\i}{dt}=\alpha\,\e-\,\gamma\,\i\,,\quad
\frac{d\r}{dt}=\gamma\,\i\,.
\label{eq:generalsyst}
\end{equation}
Let us start by a simple observation.
\begin{lemma} The solution $\(\s_k,\e,\i,\r\)$ of~\eqref{eq:generalsyst} satisfies:
\begin{equation}
\frac{d\s}{dt} = -\,\bar{\beta}\,a(t)\,\s(t)\,\i(t)\,,\quad\frac{d\e}{dt} = \bar{\beta}\,a(t)\,\s(t)\,\i(t)\,- \alpha\,\e(t)\,,\quad\frac{d\i}{dt} = \alpha\,\i(t) - \gamma\,\i(t)\,,\quad\frac{d\r}{dt} = \gamma\,\i(t)\,,
\label{eq:alternativeseir}
\end{equation}
where $\bar{\beta}= \sum_k p_k\,\beta_k = \E [\mathfrak{B}\,|\,\mathfrak{S}(0)=\mbox{\emph{``Susceptible''}}\,]$ and $a(t)$ is a positive nonincreasing function of $t$ with $a(0)=1$.
\label{lemma:attentuation}
\end{lemma}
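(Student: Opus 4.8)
The plan is to obtain the closed equation for $\s=\sum_k\s_k$ by summation and to read off $a(t)$ as a ratio of weighted averages of the $\beta_k$. Summing the first equation in~\eqref{eq:generalsyst} over $k$ gives $\frac{d\s}{dt}=-\big(\sum_k\beta_k\,\s_k\big)\,\i$, so the only real task is to write $\sum_k\beta_k\,\s_k=\bar\beta\,a(t)\,\s(t)$. This forces the definition
\[
a(t):=\frac{\sum_{k=1}^K\beta_k\,\s_k(t)}{\bar\beta\,\s(t)}\,,
\]
and, once this identity holds by construction, the equations for $\e$, $\i$ and $\r$ in~\eqref{eq:alternativeseir} follow by direct substitution into~\eqref{eq:generalsyst}. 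The verification that $a(0)=1$ is then immediate: since $p_k=\s_k(0)/\s(0)$, one has $\bar\beta\,\s(0)=\big(\sum_k p_k\,\beta_k\big)\,\s(0)=\sum_k\beta_k\,\s_k(0)$, which is exactly the numerator of $a$ at $t=0$.

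For positivity I would integrate the $\s_k$ equation explicitly. Writing $\Phi(t):=\int_0^t\i(\tau)\,d\tau$, which is nonnegative and nondecreasing because $\i\ge0$ along the flow, the equation $\s_k'=-\beta_k\,\s_k\,\i$ integrates to $\s_k(t)=\s_k(0)\,e^{-\beta_k\,\Phi(t)}$. Hence every $\s_k(t)$ with $\s_k(0)>0$ stays strictly positive, so both numerator and denominator of $a(t)$ are positive; groups with $p_k=0$ contribute nothing and can be discarded from the outset.

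The heart of the argument is the monotonicity. Using the explicit solution, $a(t)=A(\Phi(t))/\bar\beta$, where
\[
A(\varphi):=\frac{\sum_k\beta_k\,\s_k(0)\,e^{-\beta_k\varphi}}{\sum_k\s_k(0)\,e^{-\beta_k\varphi}}
\]
is the mean of the family $(\beta_k)$ under the tilted weights $w_k(\varphi)=\s_k(0)\,e^{-\beta_k\varphi}$ (which reduce to $\s_k(t)$ at $\varphi=\Phi(t)$). Differentiating $A$ in $\varphi$ with the quotient rule and using $w_k'=-\beta_k\,w_k$, I expect the identity
\[
\frac{dA}{d\varphi}=-\Bigg(\frac{\sum_k\beta_k^2\,w_k}{\sum_k w_k}-\Big(\frac{\sum_k\beta_k\,w_k}{\sum_k w_k}\Big)^2\Bigg)=-\,\mathrm{Var}_w(\beta)\le0\,,
\]
the right-hand side being minus the weighted variance of $(\beta_k)$, hence nonpositive. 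By the chain rule $\frac{d}{dt}a(t)=\bar\beta^{-1}\,A'(\Phi(t))\,\i(t)\le0$ since $\i\ge0$, so $a$ is nonincreasing.

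I expect the only genuinely delicate point to be the recognition that the $\varphi$-derivative of the weighted mean equals exactly minus the weighted variance; everything else is bookkeeping and the elementary integration of the $\s_k$ equations. One should also keep in mind that global existence, the nonnegativity of $\i$, and the boundedness needed to justify these manipulations are inherited from the invariant-domain and conservation analysis of Section~\ref{Sec:SEIR1G}, so no new well-posedness issue arises here.
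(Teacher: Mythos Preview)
Your proof is correct and follows essentially the same route as the paper: the same definition of $a(t)$, the same verification that $a(0)=1$, and the same monotonicity computation, which the paper phrases as Cauchy--Schwarz applied to $(\sqrt{\s_k})$ and $(\beta_k\sqrt{\s_k})$ rather than as nonnegativity of the weighted variance. Your detour through the explicit integral $\Phi(t)=\int_0^t\i$ and the tilted-weight mean $A(\varphi)$ is a harmless reparametrization of the paper's direct $t$-differentiation of $a$.
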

\begin{proof} System~\eqref{eq:alternativeseir} is satisfied 
with
\be{def:a}
a(t)=\frac{\sum_k\beta_k\,\s_k(t)}{\bar\beta\,\s(t)}\,.
\ee
By definition of $\bar{\beta}$, we know that $a(0)=1$. Using the equation for ${d\s_k}/{dt}$, we can compute the derivative of $a$ and conclude that
\be{decay:a}
\frac{da}{dt} = - \frac{\i(t)}{\bar\beta\,\s(t)^2}\({\textstyle\sum_k\s_k(t)\,\sum_k \beta_k^2\,\s_k(t)-\big( \sum_k \beta_k\,\s_k(t) \big)^2}\)\le0
\ee
by the Cauchy-Schwarz inequality.
\end{proof}
\begin{theorem}\label{Thm:Main} Suppose that $\r(0)=0$. Then~\eqref{eq:generalsyst} possesses the following properties:
\begin{enumerate}
\item \label{item:thmr0mean} The basic reproduction ratio is the average of the reproduction ratios, \emph{i.e.}, $\overline{\mathfrak{R}}=\bar{\beta}/\gamma$. In probabilistic notation, $\overline{\mathfrak{R}} = \E [\mathfrak{R}\,|\,\mathfrak{S}(0)=\mbox{\emph{``Susceptible''}}\,]$~\footnote{~Here we will only consider the value of $\overline{\mathfrak{R}}$ for the event $\{ \mathfrak{S}(0)=\mbox{\emph{``Susceptible''}} \}$. Although the notation designates a random variable, we will consider it, when there is no ambiguity, as a real number.}.
\item \label{item:thmpeak} If $\overline{\mathfrak R}>1$, the peak (defined as the maximum value attained by $\e+\i$) is smaller than the peak obtained with a Dirac mass distribution (\emph{i.e.}, having only one group) with the same basic reproduction ratio.
\item \label{item:thmsize} 
If $\overline{\mathfrak R}>1$, the total epidemic size $\zeta$ is the unique solution of
\begin{equation}
1- \zeta = \s(0)\;\E\left[e^{- \mathfrak{R}\,\zeta}\,|\,\mathfrak{S}(0)=\mbox{\emph{``Susceptible''}}\,\right].
\label{eq:zetageneral}
\end{equation}
For any distribution, $\zeta$ is smaller than the total epidemic size of a Dirac mass distribution having the same average.
\end{enumerate}
\end{theorem}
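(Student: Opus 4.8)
The plan is to dispatch item~(\ref{item:thmr0mean}) with the next-generation-matrix computation already used in Section~\ref{sec:roliterature}, and then to reduce items~(\ref{item:thmpeak}) and~(\ref{item:thmsize}) to a single convexity inequality through one structural observation. For item~(\ref{item:thmr0mean}), I restrict \eqref{eq:generalsyst} to the infectious compartments $(\e,\i)$ and linearize around the DFE solution, where $\s_k=p_k$. The only change with respect to Section~\ref{sec:roliterature} is that the new-infection matrix becomes $\mathrm{F}=\bigl(\begin{smallmatrix}0&\bar\beta\\0&0\end{smallmatrix}\bigr)$, the entry $\sum_k p_k\beta_k=\bar\beta$ arising because $\s_k=p_k$ at the DFE, while $\mathrm{V}$ is unchanged from \eqref{FV}. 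Then $\mathrm{F}\,\mathrm{V}^{-1}$ has nonzero eigenvalue $\bar\beta/\gamma$, which is $\overline{\mathfrak{R}}=\E[\mathfrak{R}]$ by definition of $\bar\beta$; here and below $\E[\cdot]$ denotes the conditional expectation given $\{\mathfrak{S}(0)=\mbox{\emph{``Susceptible''}}\}$.

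The structural observation is that $\e+\i$ is an explicit function of $\r$ alone. Integrating $\tfrac{d}{dt}\log\s_k=-\beta_k\,\i=-\tfrac{\beta_k}{\gamma}\,\r'$ and using $\r(0)=0$ gives $\s_k(t)=p_k\,\s(0)\,e^{-R_k\,\r(t)}$, hence $\s(t)=\s(0)\,\E\bigl[e^{-\mathfrak{R}\,\r(t)}\bigr]$. Mass conservation $\s+\e+\i+\r=1$ then yields
\[
\e(t)+\i(t)=\Phi\bigl(\r(t)\bigr),\qquad \Phi(r):=1-\s(0)\,\E\bigl[e^{-\mathfrak{R}\,r}\bigr]-r,
\]
and $\Phi$ is concave because $r\mapsto\E[e^{-\mathfrak{R}\,r}]$ is convex. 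The same computation applied to a single group with rate $\bar\beta$ gives the analogous profile $\Phi_D(r):=1-\s(0)\,e^{-\overline{\mathfrak{R}}\,r}-r$, and Jensen's inequality $\E[e^{-\mathfrak{R}\,r}]\ge e^{-\overline{\mathfrak{R}}\,r}$ shows $\Phi\le\Phi_D$ pointwise.

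For item~(\ref{item:thmsize}), the stable equilibrium of Section~\ref{Sec:Equilibrium2} has $\e=\i=0$, so $\Phi(\zeta)=0$ with $\zeta=\r^\star$, which is exactly \eqref{eq:zetageneral}. Existence and uniqueness of the positive root follow from strict convexity of $g(r):=\s(0)\,\E[e^{-\mathfrak{R}\,r}]+r$, together with $g(0)=\s(0)\le1$, $g'(0)=1-\s(0)\,\overline{\mathfrak{R}}<0$ under $\overline{\mathfrak{R}}>1$, and $g(1)>1$: $g$ dips below $1$ and then crosses it once in $(0,1)$. The comparison with the Dirac mass of the same mean is immediate from the previous paragraph: since $\Phi_D(\zeta)\ge\Phi(\zeta)=0$ while the concave $\Phi_D$ vanishes at its own epidemic size $\zeta_D$ and is negative beyond it, we get $\zeta\le\zeta_D$.

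For item~(\ref{item:thmpeak}), $\r'=\gamma\,\i\ge0$ so $t\mapsto\r(t)$ is nondecreasing and, by the global-existence and convergence results of Section~\ref{Sec:Equilibrium2}, sweeps continuously through $[0,\zeta]$; hence the time peak of $\e+\i$ equals the static maximum $\max_{0\le r\le\zeta}\Phi(r)$. Using $\Phi\le\Phi_D$ pointwise,
\[
\max_{t\ge0}\bigl(\e(t)+\i(t)\bigr)=\max_{0\le r\le\zeta}\Phi(r)\le\sup_{r\ge0}\Phi(r)\le\sup_{r\ge0}\Phi_D(r),
\]
and the right-hand side is precisely the peak of the single-group model, because the concave $\Phi_D$ attains its global maximum at the interior point $r^\star=\overline{\mathfrak{R}}^{-1}\log\bigl(\s(0)\,\overline{\mathfrak{R}}\bigr)\in(0,\zeta_D)$, reproducing \eqref{imax-sir}. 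The main obstacle is not any single inequality but making this reduction legitimate: one must invoke the earlier large-time analysis to guarantee that $\r$ is a genuine monotone, continuous reparametrization of time onto $[0,\zeta]$, so that the dynamic peak of $\e+\i$ coincides with the static maximum of $\Phi$; once that is secured, both item~(\ref{item:thmpeak}) and the comparison in item~(\ref{item:thmsize}) collapse to the single convexity fact $\E[e^{-\mathfrak{R}\,r}]\ge e^{-\overline{\mathfrak{R}}\,r}$. A secondary point requiring care is the edge case $\s(0)=1$ in item~(\ref{item:thmsize}), where $r=0$ is a spurious root and the hypothesis $\overline{\mathfrak{R}}>1$ is exactly what isolates the genuine positive solution.
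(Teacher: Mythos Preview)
Your proof is correct and, for items~(\ref{item:thmpeak}) and~(\ref{item:thmsize}), takes a genuinely different route from the paper.

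For item~(\ref{item:thmr0mean}) you do exactly what the paper does. For the other two items the paper works differently. For item~(\ref{item:thmpeak}) the paper relies on Lemma~\ref{lemma:attentuation}: it rewrites the aggregate dynamics as a single-group SEIR with a time-dependent attenuation factor $a(t)\le1$, locates the peak at the time where $\bar\beta\,a(t_p)\,\s(t_p)=\gamma$, and then bounds the integral $\int_0^{t_p}(\e+\i)'\,dt$ using $a\le1$ together with the monotonicity of $h(x)=(\overline{\mathfrak R})^{-1}\log x-x$ on $[1/\overline{\mathfrak R},1]$, arriving at the same explicit bound~\eqref{Estim:peak}. For item~(\ref{item:thmsize}) the paper derives~\eqref{eq:zetageneral} as you do, then applies Jensen but finishes the comparison by a different algebraic device: writing $\s(0)\,e^{-\overline{\mathfrak R}\,\zeta}=\s(0)\bigl((1-\zeta_0)/\s(0)\bigr)^{\zeta/\zeta_0}$ and invoking that $x\mapsto\bigl((1-x)/\s(0)\bigr)^{1/x}$ is decreasing on $(0,\s(0))$.

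What your approach buys is unification and economy: the single observation $\e+\i=\Phi(\r)$ with $\Phi\le\Phi_D$ pointwise dispatches both the peak and the size comparison at once, and your concavity argument for $\zeta\le\zeta_D$ is shorter than the paper's monotone-auxiliary-function step. What the paper's approach buys is the attenuation Lemma~\ref{lemma:attentuation}, which has independent dynamical content (heterogeneity acts as a nonincreasing multiplicative damping $a(t)$ of the effective transmission rate) and avoids the need to argue that $\r$ continuously reparametrizes time onto $[0,\zeta]$---a step you correctly flag as the one place your reduction needs care. One small point: your claim $g'(0)<0$ actually requires $\s(0)\,\overline{\mathfrak R}>1$ rather than merely $\overline{\mathfrak R}>1$; this is harmless since for $\s(0)<1$ you already have $g(0)<1$ and strict convexity gives the unique crossing without sign information on $g'(0)$, while for $\s(0)=1$ you handle the spurious root separately.
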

By imposing the condition $\r(0)=0$, what we have in mind as initial datum is a perturbation of the DFE solution, or a solution with initial values for which the \emph{infectious compartments} are non-empty at $t=0$, \emph{i.e.}, $\e(0)+\i(0)>0$, eventually small (but this is not even mandatory). As we shall see,~\eqref{eq:zetageneral} follows by convexity as in Section~\ref{Sec:Equilibrium2}.

\begin{proof} Property~\eqref{item:thmr0mean} is obtained by linearization using the next generation method. The proof is the same as for~\eqref{eq:formulamean2R0} when $K=2$.

Next we prove Property~\eqref{item:thmpeak}. The peak is the value of $(\e+\i)$ at some time $t_p$ such that $(\e+\i)'(t_p)=0$, or, using Lemma~\ref{lemma:attentuation}, for the unique $t=t_p$ such that $\bar{\beta}\,a(t_p)\,\s(t_p) = \gamma$. In particular, since $a(t) \le 1$, we know that $\s(t_p) \ge 1/\,\overline{\mathfrak{R}}$, which is precisely the value of $\s$ at the peak for the homogeneous case $a = 1$. The value of the peak can be written as:
\begin{multline}
(\e+\i)(t_p)=(\e+\i)(0) + \int_0^{t_p}\(\bar{\beta}\,a(t)\,\s(t)\,\i(t) - \gamma\,\i(t)\)dt = (\e+\i)(0) + \int_0^{t_p}\(\frac{{(\overline{\mathfrak{R}})}^{-1}}{a(t)\,\s(t)}-1\)\frac{d\s}{dt}(t)\,dt \\ 
\le (\e+\i)(0) + \int_0^{t_p} {(\overline{\mathfrak{R}})}^{-1}\,\frac d{dt}\log\s(t)\,dt+\s(0)-\s(t_p) 
= (\e+\i)(0) + 
h(\s(t_p)) - h(\s(0))\,,
\end{multline}
where $h(x)={(\overline{\mathfrak{R}})}^{-1}\log x-x$. Since the function $x\mapsto h(x)$ is monotone decreasing on $[1/\,\overline{\mathfrak{R}},1]$, we obtain 
\be{Estim:peak}
(\e+\i)(t_p)\le (\e+\i)(0) + 
h(1/\,\overline{\mathfrak{R}}) - h(\s(0))
= 1-\frac{1}{\overline{\mathfrak R}}\left[1+\log 
\left( \overline{\mathfrak R}\,\s(0)\right) \,\right]\,.
\ee

Property~\eqref{item:thmsize} is obtained as follows. Denote by $\zeta$ the epidemic size, which satisfies the two identities: $\zeta = \r(\infty)= \r(\infty) - \r(0) = \s(0)+\e(0)+\i(0) - \sum_k \s_k(\infty)$. Thus $\zeta = \int_0^\infty \r'(t)\,dt = \gamma \int_0^\infty \i(t)\,dt$. For any $k \le K$: $s_k(\infty) = s_k(0)\,\exp\(-\int_0^\infty \beta_k\,\i(t)\,dt\) = \s(0)\, p_k\, e^{- \zeta\,\beta_k / \gamma}$. Combining the above relations we obtain~\eqref{eq:zetageneral}. In particular if the distribution is reduced to a Dirac mass centered at the mean value $\overline{\mathfrak{R}}$, the epidemic size, denoted by $\zeta_0$, satisfies: $\zeta_0 = 1- \s(0)\,e^{-\overline{\mathfrak{R}}\, \zeta_0}$ as expected. We use now the Jensen inequality for the convex function $x\mapsto e^{-x}$:
\begin{equation}
1- \zeta = \s(0)\,\E[e^{-R_\omega\,\zeta}\,|\,\mathfrak{S}(0)=\mbox{\emph{``Susceptible''}}\,] \ge\s(0)\,e^{-\overline{\mathfrak{R}}\,\zeta} = 
\s(0) \left( e^{-\overline{\mathfrak{R}}\,\zeta_0} \right)^{\zeta/\zeta_0}
 = \s(0) \left( \frac{1- \zeta_0}{\s(0)} \right)^{\zeta/\zeta_0}.
\end{equation}
We obtained thus $ \left(\frac{1- \zeta}{\s(0)} \right)^{1/\zeta} \ge \left(\frac{1- \zeta_0}{\s(0)} \right)^{1/\zeta_0}$, which implies, after recalling that $ \left(\frac{1- x}{\s(0)} \right)^{1/x} $ is a decreasing function on $]0,\s(0)[$ that $\zeta \le \zeta_0$, our conclusion.
\end{proof}

Note that, by assuming that $(\e+\i)(0)$ is small, it is clear that the peak can be made small if $\overline{\mathfrak R}-1>0$ is small.

\section{Numerical results}

\subsection{Choice of a set of parameters}

It is not the purpose of this paper to discuss the values of the parameters in a SEIR model of the pandemic of Covid-19 and we shall simply choose values that can be found in the literature, for a purely illustrative purpose. However, for the interested reader, we list our sources and some entries in the rapidly growing literature on the topic.

\medskip Concerning the initial data for the SEIR model, we shall assume that the size of the French population is $N=67.8\times10^6$. The following values correspond to the situation for the Covid-19 in France on March 15, 2020, according to~\cite{bacaer:hal-02509142} and data released on a daily basis by SPF at~\cite{SPF}. We shall take as initial data the values
\[
S(0)=N-\mathcal R_0\,R(0)\,,\quad E(0)=5970\,,\quad I(0)=1278\,,\quad R(0)=\frac{E(0)+I(0)}{\mathcal R_0-1}\approx 5450
\]
with $\mathcal R_0=2.33$. Here we provide the formulae used in~\cite{bacaer:hal-02509142} to infer the numbers, which are based on an asymptotic analysis of the SEIR model during the initial phase of the epidemics (without lockdown). Such values are important for controlled epidemics but play essentially no role if the disease is spreading in the population, which is the case under investigation. Note that it is very likely that the numbers will be revised in the future, to account for non well documented cases at the time this study was done. In any case, for numerical computations, we take the following initial data, as fractions of the total population
\be{InitialData}
\s(0)=0.99981\,,\quad\e(0)=8.81\times10^{-5}\,,\quad\i(0)=1.88\times10^{-5}\,,\quad\r(0)=8.04\times10^{-5}\,,
\ee
in all our examples. 

\medskip Now comes the issue of estimating the parameters $\beta$, $\gamma$ and $\alpha$ of the SEIR model, and the factor of reduction of social interactions $q$, at least as an average, in a socially homogeneous model and also for the majority of the population in the model with two groups. The methodology is out of the scope of this paper and we will not comment it. We refer to~\cite{magal2020.03.21.20040154} for a recent discussion. However, in order to fix the order of magnitude and give an idea of the uncertainties, let us review some numbers which recently appeared. 
An estimate based on the statistics of the known cases of Covid-19 at the beginning of the outbreak has been proposed in~\cite{bacaer:hal-02509142}, with the following values
\be{BacaerParameters}
\beta=2.33\,,\quad\alpha=0.25\,,\quad \gamma=1\,,\quad \mathcal R_0=2.33
\ee
before lockdown. By fitting the values for the first two weeks of lockdown, N.~Baca\"er came up with $q\approx1.7$ for an estimate of the factor of reduction of social interactions. These numbers are the ones used for our choice~\eqref{InitialData}. We learn from~\cite{inserm2020} that the parameters in the SEIR model can be estimated by: $\beta=2$, $\alpha=1/3.7\approx0.27$, $\gamma=1/1.5\approx0.67$ and $\mathcal R_0=3$ at regional level (Ile-de-France) and it was suggested that $q\approx2.94$ so that $\mathcal R_0/q=0.68$. This is consistent with the estimate $0.67$ of~\cite{Salje_2020}.

Before going further, let us list some limitations of our model and features which have to be taken into account in more realistic models. In~\cite{magal2020.03.21.20040154}, it is argued that \emph{Unreported} cases (U) should be taken into account in a S(E)IRU model for correctly accounting the Covid-19 outbreaks (also see~\cite{liu2020covid}). There is certainly an important point here, although we did not introduce it for sake of simplicity. Note that such models produce a very high~$\mathcal R_0$ (at least on the basis of the data used for fitting the curves) ranging from $4.45$ to $4.49$. In~\cite{EHESP}, the basic reproduction ratio is adjusted to $\mathcal R_0=2.8$. In~\cite{Salje_2020}, the authors find that the basic reproduction ratio, which was $\mathcal R_0\approx 3.41$ before lockdown, has been reduced to $0.52$, which corresponds to $q\approx6.6$. This is consistent with the factor $q=7$ found in~\cite{Roques_2020}. Of course important factors like risk status or age (see for instance~\cite{Gerasimov_2020,griette_2020age}) have been put in evidence, but they are not easy to study from the point of view of transmission rates. In our aproach, it is clear that we take oversimplifying assumptions, but this is probably the price to pay to prove results and put them numerically in evidence.

Recent papers have emphasized that the heterogeneity of the transmissions rates is important in the current pandemic and that it lowers herd immunity thresholds in various modeling frameworks. During the revision of this paper, we became aware of~\cite{Gomes_2020,Gerasimov_2020,Endo_2020}. Superspreaders and superspreading events are known to play an essential role in the propagation of Covid-19 according to~\cite{Adam_2020,Kucharski_2020,Prem_2020,Riou_2020,Althouse,H_bert_Dufresne_2020,Grossmann_2020}. Incorporating these considerations in compartmental models will anyway require further studies.

\medskip So far it remains difficult to choose a set of parameters, although one can hope that a better understanding of the dynamics of the pandemic will emerge out of the various studies that are currently done. What we learn after the end of the lockdown is still to be studied. Now comes a very empirical observation, in the framework of SEIR models, with parameters that are supposed to be constant in time, a crude assumption that is definitely not valid after the end of the lockdown. This empirical observation is the starting point for this paper. Situations with a basic reproduction ratio $\mathcal R_0<1$ exhibit an exponential decay of the number of cases while the data of~\cite{SPF} (at the beginning of the lockdown) were emblematic of a situation with $\mathcal R_0>1$. Several phases for a population under lockdown are illustrated in~\cite{lin2020analysis} by the computation of the \emph{effective reproduction ratio $\mathcal R_t$} (which is of the order of $\mathcal R_0$ as only a small proportion of the total population was concerned) in Wuhan, based on the reported daily Covid-19 infections:
when there was no intervention (before January 23, 2020):
$\mathcal R_t\approx 3.88$;
under lockdown with traffic ban and many confined at home (between January 23-February 1st, 2020):
$\mathcal R_t\approx 1.25$;
using \emph{centralized confinement} (after February 1st, 2020):
$\mathcal R_t\approx 0.32$.
Also see~\cite{Liu_2020} for a detailed analysis of the data in Wuhan based on a SIR type model. The curves in France and southern Europe during the first month of lockdown look more like to what happened in Wuhan by the end of january, with an $\mathcal R_0>1$, than to what happened later. This is what we intend to explain in France, at least partially, by social heterogeneity. For the sake of simplicity, we shall retain the values of~\eqref{InitialData}, and one of the reasons is that $q\approx1.7<\mathcal R_0$ is compatible with the above remarks, but our models of Sections~\ref{Sec:SEIR2G} and~\ref{Sec:Theoretical} show that there might be some subtleties when heterogeneities are taken into account.

\subsection{SEIR model with a single group}

We start by considering the standard SEIR model~\eqref{s}-\eqref{r} or its variant~\eqref{sq}-\eqref{rq}, where $q$ is the factor of reduction of social interactions. The basic reproduction ratio in the second system is $\mathcal R_0^{(1)}(q)=\mathcal R_0/q$, which allows us to reduce both problems to~\eqref{s}-\eqref{r} with various values of the basic reproduction ratio depending on the $q$ factor. Here we choose the initial data according to~\eqref{InitialData} and the set of parameters~\eqref{BacaerParameters}. See Fig.~\ref{f1-f2} for the epidemic curves, and Fig.~\ref{f3} for the epidemic size.

\setlength\unitlength{1cm}
\begin{figure}[ht]\begin{picture}(16,5)
\put(0,0){\includegraphics[width=8cm]{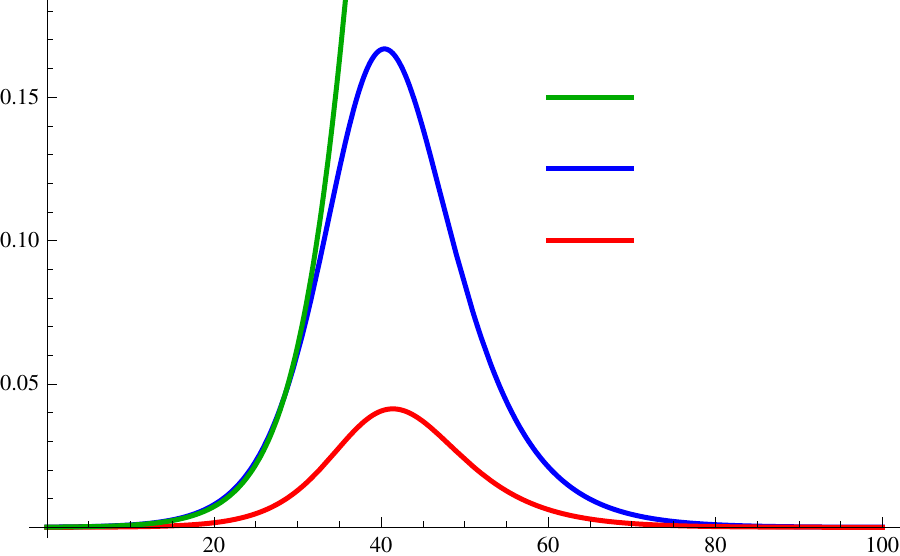}}
\put(8,0){\includegraphics[width=8cm]{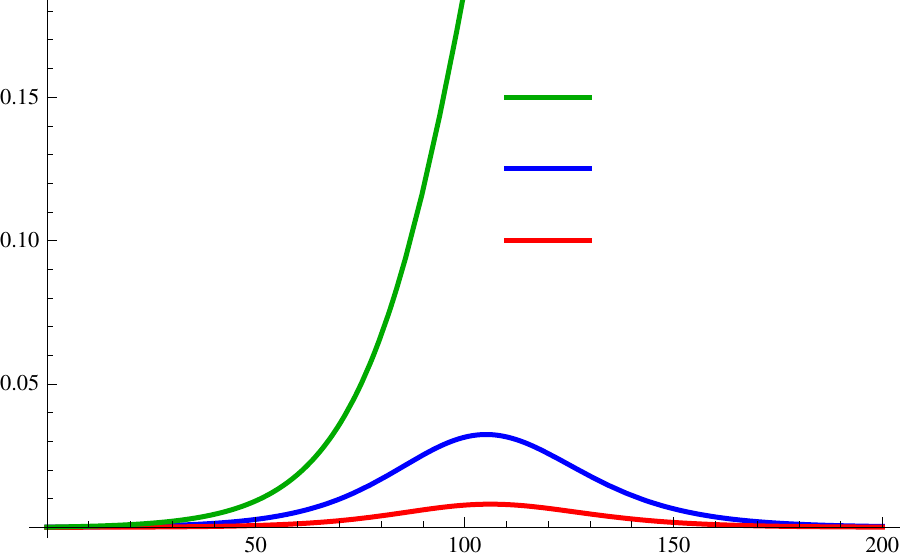}}
\put(5.75,4){\small\emph{Recovered}}
\put(5.75,3.33){\small\emph{Exposed}}
\put(5.75,2.67){\small\emph{Infected}}
\put(13.5,4){\small\emph{Recovered}}
\put(13.5,3.33){\small\emph{Exposed}}
\put(13.5,2.67){\small\emph{Infected}}
\put(7.5,0.5){$t$}
\put(15.5,0.5){$t$}
\put(1,4.5){$q=1$}
\put(1,3.83){$\mathcal R_0=2.33$}
\put(9,4.5){$q=1.7$}
\put(9,3.83){$\mathcal R_0^{(1)}=1.37$}
\end{picture}
\caption{\label{f1-f2} The peak of the outbreak in the SEIR model. The time $t$ is counted in days. The vertical axis represents the fraction of the population. The basic reproduction ratio is either $\mathcal R_0=2.33$ (left) or $1.37$ (right) corresponding to a reduction of social interactions by a factor $q=1.7$ as in~\cite{bacaer:hal-02509142}. This illustrates the \emph{flattening of the curves}.}
\end{figure}

\begin{figure}[ht]
\begin{picture}(8,5)
\put(0,0){\includegraphics[width=8cm]{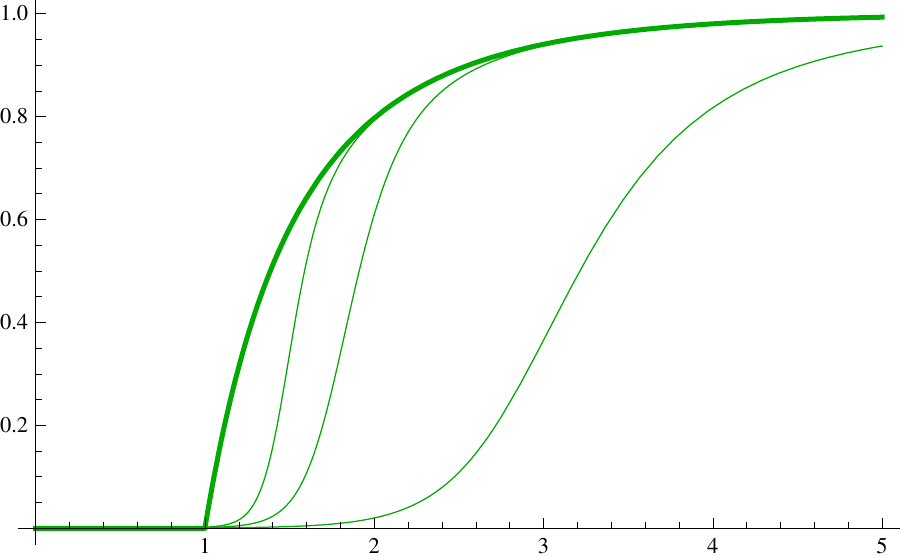}}
\put(2.75,4){$\zeta$}
\put(2.3,2){$t=90$}
\put(2.6,1.4){$t=60$}
\put(3.95,0.9){$t=30$}
\put(7.5,0.5){$\mathcal R_0$}
\end{picture}
\caption{\label{f3} The epidemic size $\zeta$ as a function of the basic reproduction ratio $\mathcal R_0$ in the simplest SEIR model corresponding to~\eqref{s}-\eqref{r} exhibits a clear phase transition at $\mathcal R_0=1$ (thick curve). The other curves represent $\r(t)$ taken for $t=30$, $60$, $90$, for which equilibrium is not yet achieved. In practice, varying $\mathcal R_0$ is achieved by acting on the $q$ factor in~\eqref{sq}-\eqref{rq}.}
\end{figure}

\subsection{SEIR model with two groups having different factors of reduction of social interactions}

One of the most disturbing results in~\cite{bacaer:hal-02509142} is that fitting the data of the cumulated number of cases by model~\eqref{sq}-\eqref{rq} shows that $q\approx1.7$, which is far below $\mathcal R_0=2.33$ and suggest that lockdown is inefficient for controlling the outbreak. However, we have seen that there is another possible model, as it is illustrated by Fig.~\ref{f4}, with two groups. From the figure, it is clearly impossible to distinguish between the two scenarios at the early stage of the outbreak.
\begin{figure}[hb]\begin{picture}(8,5)
\put(0,0){\includegraphics[width=8cm]{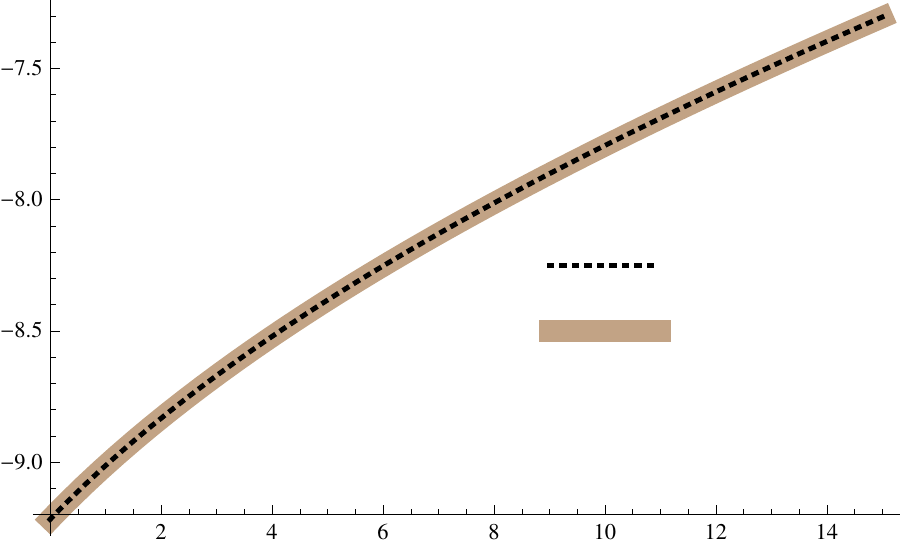}}
\put(7.8,0.5){$t$}
\put(1,4.5){$\log(\e+\i+\r)$}
\put(6.25,2.38){\small{One group}}
\put(6.25,1.8){\small{Two groups}}
\end{picture}
\caption{\label{f4} Plot of $t\mapsto\e(t)+\i(t)+\r(t)$ for a solution of~\eqref{sq}-\eqref{rq} with $q=1.7$ (black dotted line) \emph{versus} a plot of $t\mapsto\e(t)+\i(t)+\r(t)$ for a solution of~\eqref{Systq2} in logarithmic scale with $p=0.02$, $q_1=2.35$, and $q_2=0.117$ (brown). In both cases the basic reproduction ratio is $1.37$.}
\end{figure}

With two groups arises the question of choosing the initial data. As a crude and simplifying assumption, we make the choice to consider that before lockdown there was a single population and that the two categories of the population had the same exposure to the disease. We can then use~\eqref{InitialData} as initial condition and take $\s_1(0)=(1-p)\,\s(0)$ and $\s_2(0)=p\,\s(0)$. With this choice, we recover the results of~\eqref{sq}-\eqref{rq} if $q_1=q_2$ and $\s=\s_1+\s_2$. In order to fix ideas, we also make the arbitrary choice of choosing $q_1=2.35$ so that the epidemic disease would extinguish by itself after affecting $0.81\%$ of the population if $q_2=q_1$. We can illustrate the role of the two parameters $p$ and $q_2$ by showing that they completely change the picture and bring us back to a regime with an epidemic size corresponding to some inefficient lockdown, however with lower epidemic peak and size: see Fig.~\ref{f5-f6}, and Fig.~\ref{f7} for the epidemic size.
\begin{figure}[hb]\begin{picture}(16,5)
\put(0,0){\includegraphics[width=8cm]{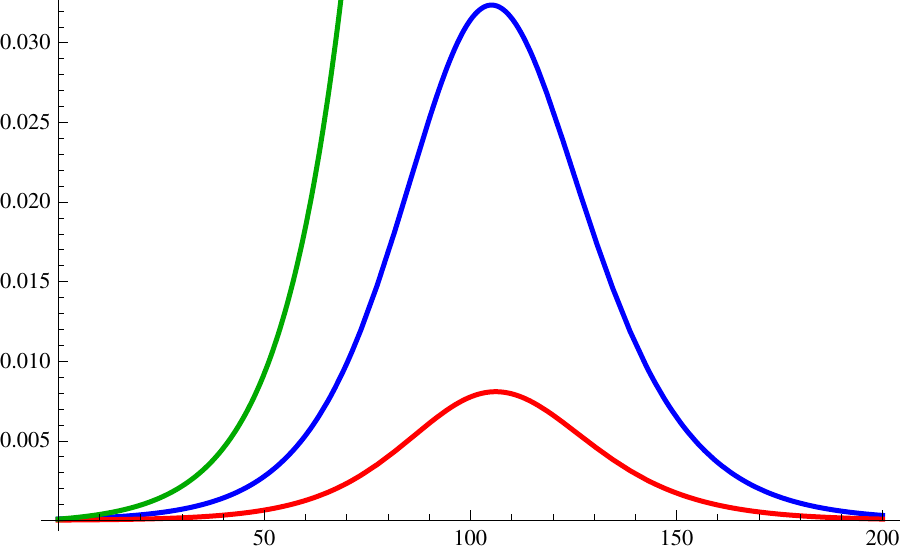}}
\put(8,0){\includegraphics[width=8cm]{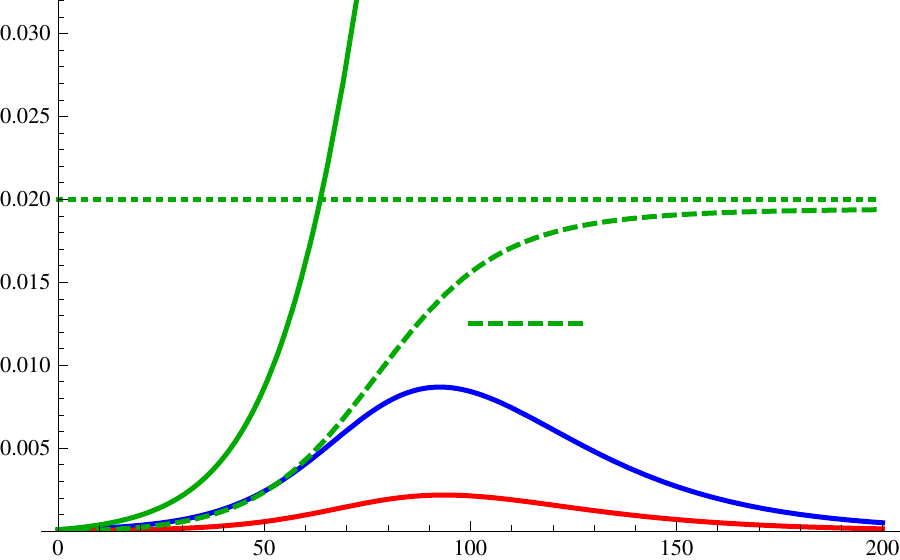}}
\put(5.5,4.8){\small{One group}}
\put(5.5,4.2){$\mathcal R_0^{(1)}=1.37$}
\put(13.5,4.8){\small{Two groups}}
\put(13.5,4.2){$\mathcal R_0^{(2)}=1.37$}
\put(13.25,2){\small\emph{Recovered, group 2}}
\end{picture}
\caption{\label{f5-f6} Model with a single group (left) and $q=1.7$ corresponding to a basic reproduction ratio of $1.37$, and two groups (right) with $q_1=2.35$, $q_2=0.117$, and $p=0.02$ as in Fig.~\ref{f4}, with same basic reproduction ratio. Note that the figure on the left is the same as in Fig.~\ref{f1-f2} (right), on a different scale. The straight dotted line is the level $p$. In the case with two groups, note that almost all individuals of the second group get infected during the propagation of the disease.}
\end{figure}

\begin{figure}[ht]\begin{picture}(8,5)
\put(0,0){\includegraphics[width=8cm]{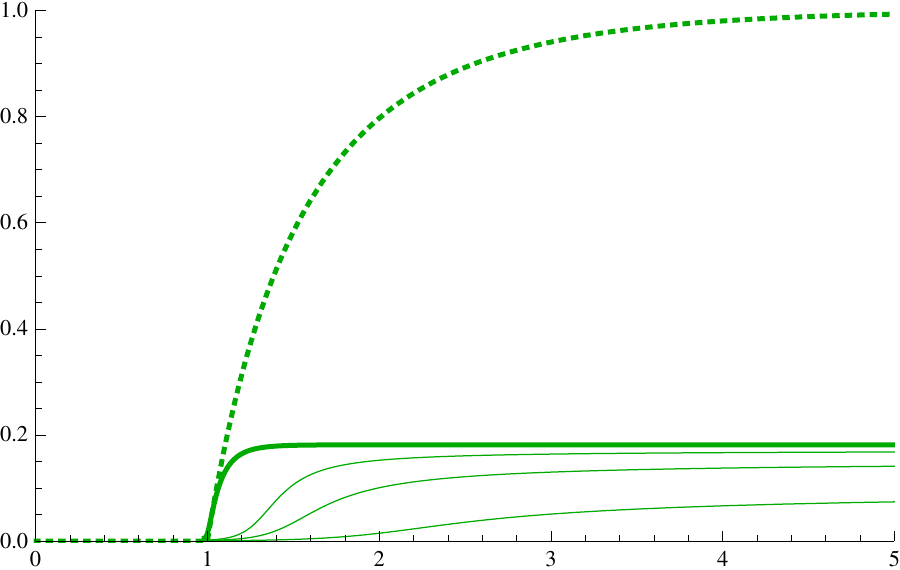}}
\put(5.5,4.4){$\zeta$\small{, one group}}
\put(5.5,1.5){$\zeta$\small{, two groups}}
\put(8.25,0.25){$\mathcal R_0$}
\end{picture}
\caption{\label{f7} The epidemic size in a population with one group (dotted curve, same as in Fig.~\ref{f3}) or two groups (plain, thick curve) with $q_1=2.4>\mathcal R_0=2.33$ and $p=0.02$ as a function of the basic reproduction ratio (obtained by varying $q$ in the first case and $q_2$ in the second case). The other lines correspond to simulations of $\r(t)$ for $t=30$, $60$, $90$ in the model with two groups. With two groups, we recover a phase transition as in~Fig.~\ref{f3}, corresponding to $q_2\approx0.96$.}
\end{figure}

\section{Discussion}\label{Sec:Discussion}

In the SEIR model with a basic reproduction ratio $\mathcal R_0>1$, the stable equilibrium does not depends much on the initial data: it is almost entirely determined by $\mathcal R_0$. The goal of lockdown is to decrease $\mathcal R_0$ by dividing it by a reduction factor for social interactions, $q$. With $\mathcal R_0/q<1$, the disease is under control. With $\mathcal R_0/q>1$, but close to $1$, the epidemic disease spreads, the final state depends little on the initial data, but the curve is flattened: qualitatively, this is the scenario that one can observe in the Covid-19 outbreak in France under lockdown.

Measuring the $q$ factor is difficult. The aim of this article is to show that the crucial information cannot be reduced to the knowledge of an average factor $q$: if the population is divided into two groups, with a group for which $q>\mathcal R_0$ (the majority) and another group (the minority) that keeps a small $q$ factor, the disease may continue to spread. If the $q$ factor of the majority is larger than $\mathcal R_0$ but close to $\mathcal R_0$, so that $1-\mathcal R_0/q>0$ is small, the impact of the minority becomes extremely important as it eventually triggers the nonlinearity. However, the equilibrium asymptotic state in a two-group model is not the same as when considering a single group with an averaged basic reproduction ratio. The dynamics of the outbreak, for instance the height of the epidemic peak, is also changed. A two-group model is of course extremely simplistic, but shows the importance of understanding the distribution of the $q$ factors in a population.

Our observations are not limited to a population divided into two groups. In our model of heterogeneous social interactions, with a whole distribution of $q$ factors, we have shown that an average of $q$ is not relevant. While the basic reproduction ratio behaves as a plain average across the homogeneous $q$~categories, neither the peak nor the total epidemic size do the same. In particular the presence of heterogeneity is beneficial for both the peak and total epidemic size. Or, put it otherwise, a model with only one group and fitting the observed data in the initial phase of the outbreak will be more pessimistic concerning the epidemic outcomes than a heterogeneous model; this is even more true after lockdown when social distancing measures have been enforced, the lockdown being by its nature a creator of heterogeneity. In terms of public health, this also underlines the importance of targeting prevention measures on individuals with a high level of social interactions.

\begin{acknowledgement} This work has been partially supported by the Project EFI (ANR-17-CE40-0030) of the French National Research Agency (ANR). The authors thank Nicolas Baca\"er for a stimulating discussion which was the starting point of this research project, Gilles Z\'erah for pointing them an important reference, and an anonymous referee for pointing them a missing argument and suggesting several improvements. The authors also thank the \href{https://insmidirect.math.cnrs.fr/spip.php?article3647}{MODCOV19} platform for encouragements and collective effort. \end{acknowledgement}


\begin{thebibliography}{10}

\bibitem{Adam_2020}
{\sc D.~Adam, P.~Wu, J.~Wong, E.~Lau, T.~Tsang, S.~Cauchemez, G.~Leung, and
  B.~Cowling}, {\em Clustering and superspreading potential of severe acute
  respiratory syndrome coronavirus 2 {(SARS-CoV-2)} infections in {H}ong
  {K}ong}, Research Square, Doi:
  \href{http://dx.doi.org/10.21203/rs.3.rs-29548/v1}{10.21203/rs.3.rs-29548/v1},
   (2020).

\bibitem{Adler_1992}
{\sc F.~R. Adler}, {\em The effects of averaging on the basic reproduction
  ratio}, Mathematical Biosciences, 111 (1992), pp.~89--98.

\bibitem{Althouse}
{\sc B.~M. Althouse and et~al.}, {\em Stochasticity and heterogeneity in the
  transmission dynamics of {SARS-CoV-2}}.
\newblock \\ \href{https://covid.idmod.org/}{https://covid.idmod.org/}.

\bibitem{anderson1992infectious}
{\sc R.~M. Anderson, B.~Anderson, and R.~M. May}, {\em Infectious diseases of
  humans: dynamics and control}, Oxford university press, 1992.

\bibitem{bacaer:hal-02509142}
{\sc N.~Baca{\"e}r}, {\em Un mod{\`e}le math{\'e}matique des d{\'e}buts de
  l'{\'e}pid{\'e}mie de coronavirus en {F}rance}, Mathematical Modelling of
  Natural Phenomena, 15 (2020), p.~29.

\bibitem{Blackwood_2018}
{\sc J.~C. Blackwood and L.~M. Childs}, {\em An introduction to compartmental
  modeling for the budding infectious disease modeler}, Letters in
  Biomathematics, 5 (2018), pp.~195--221.

\bibitem{Buonomo_2008}
{\sc B.~Buonomo, A.~d'~Onofrio, and D.~Lacitignola}, {\em Global stability of
  an {SIR} epidemic model with information dependent vaccination}, Mathematical
  Biosciences, 216 (2008), pp.~9--16.

\bibitem{Corless_1996}
{\sc R.~M. Corless, G.~H. Gonnet, D.~E.~G. Hare, D.~J. Jeffrey, and D.~E.
  Knuth}, {\em On the {L}ambert {$W$} function}, Advances in Computational
  Mathematics, 5 (1996), pp.~329--359.

\bibitem{danchin_new_2020}
{\sc A.~Danchin, T.~W.~P. Ng, and G.~TURINICI}, {\em A new transmission route
  for the propagation of the {SARS}-{CoV}-2 coronavirus}, medRxiv Doi:
  \href{https://doi.org/10.1101/2020.02.14.20022939}{10.1101/2020.02.14.20022939},
   (2020).

\bibitem{inserm2020}
{\sc G.~Di~Domenico, Laura~Pullano, C.~E. Sabbatini, P.-Y. Bo{\"e}lle, and
  V.~Colizza}, {\em Expected impact of lockdown in {{\^I}le-de-France} and
  possible exit strategies}.
\newblock \href{https://www.epicx-lab.com}{https://www.epicx-lab.com}.

\bibitem{diekmann2000mathematical}
{\sc O.~Diekmann and J.~A.~P. Heesterbeek}, {\em Mathematical epidemiology of
  infectious diseases: model building, analysis and interpretation}, vol.~5,
  John Wiley \& Sons, 2000.

\bibitem{MR1057044}
{\sc O.~Diekmann, J.~A.~P. Heesterbeek, and J.~A.~J. Metz}, {\em On the
  definition and the computation of the basic reproduction ratio {$\mathcal
  R_0$} in models for infectious diseases in heterogeneous populations}, J.
  Math. Biol., 28 (1990), pp.~365--382.

\bibitem{Diekmann_2009}
{\sc O.~Diekmann, J.~A.~P. Heesterbeek, and M.~G. Roberts}, {\em The
  construction of next-generation matrices for compartmental epidemic models},
  Journal of The Royal Society Interface, 7 (2009), pp.~873--885.

\bibitem{Endo_2020}
{\sc A.~Endo, S.~Abbott, A.~J. Kucharski, and S.~Funk}, {\em Estimating the
  overdispersion in {COVID-19} transmission using outbreak sizes outside
  {C}hina}, Wellcome Open Research, 5 (2020), p.~67.

\bibitem{feng2007final}
{\sc Z.~Feng}, {\em Final and peak epidemic sizes for {SEIR} models with
  quarantine and isolation}, Mathematical Biosciences \& Engineering, 4 (2007),
  p.~675.

\bibitem{Gerasimov_2020}
{\sc A.~Gerasimov, G.~Lebedev, M.~Lebedev, and I.~Semenycheva}, {\em Reaching
  collective immunity for {COVID-19}: an estimate with a heterogeneous model
  based on the data for {I}taly}, medRxiv Doi:
  \href{http://dx.doi.org/10.1101/2020.05.24.20112045}{10.1101/2020.05.24.20112045},
   (2020).

\bibitem{Gomes_2020}
{\sc M.~G.~M. Gomes, R.~M. Corder, J.~G. King, K.~E. Langwig, C.~Souto-Maior,
  J.~Carneiro, G.~Goncalves, C.~Penha-Goncalves, M.~U. Ferreira, and R.~Aguas},
  {\em Individual variation in susceptibility or exposure to {SARS-CoV-2}
  lowers the herd immunity threshold}, medRxiv Doi:
  \href{http://dx.doi.org/10.1101/2020.04.27.20081893}{10.1101/2020.04.27.20081893},
   (2020).

\bibitem{griette_2020age}
{\sc Q.~Griette, P.~Magal, and O.~Seydi}, {\em Unreported cases for age
  dependent {COVID-19} outbreak in {J}apan}, medRxiv Doi:
  \href{http://dx.doi.org/10.1101/2020.05.07.20093807}{10.1101/2020.05.07.20093807},
   (2020).

\bibitem{Grossmann_2020}
{\sc G.~Grossmann, M.~Backenkoehler, and V.~Wolf}, {\em Importance of
  interaction structure and stochasticity for epidemic spreading: A {COVID-19}
  case study}, medRxiv Doi:
  \href{http://dx.doi.org/10.1101/2020.05.05.20091736}{10.1101/2020.05.05.20091736},
   (2020).

\bibitem{H_bert_Dufresne_2020}
{\sc L.~H{\'e}bert-Dufresne, B.~M. Althouse, S.~V. Scarpino, and A.~Allard},
  {\em Beyond {R0}: Heterogeneity in secondary infections and probabilistic
  epidemic forecasting}, medRxiv Doi:
  \href{http://dx.doi.org/10.1101/2020.02.10.20021725}{10.1101/2020.02.10.20021725},
   (2020).

\bibitem{zbMATH00898403}
{\sc H.~W. {Hethcote}}, {\em {Modeling heterogeneous mixing in infectious
  disease dynamics}}, in {Models for infectious human diseases: their structure
  and relation to data}, Cambridge: Cambridge University Press, 1996,
  pp.~215--238.

\bibitem{MR1814049}
{\sc H.~W. Hethcote}, {\em The mathematics of infectious diseases}, SIAM Rev.,
  42 (2000), pp.~599--653.

\bibitem{hethcote1987epidemiological}
{\sc H.~W. Hethcote and J.~W. Van~Ark}, {\em Epidemiological models for
  heterogeneous populations: proportionate mixing, parameter estimation, and
  immunization programs}, Mathematical Biosciences, 84 (1987), pp.~85--118.

\bibitem{Hyman_1988}
{\sc J.~M. Hyman and E.~Stanley}, {\em Using mathematical models to understand
  the aids epidemic}, Mathematical Biosciences, 90 (1988), pp.~415--473.

\bibitem{Jacquez_1988}
{\sc J.~A. Jacquez, C.~P. Simon, J.~Koopman, L.~Sattenspiel, and T.~Perry},
  {\em Modeling and analyzing {HIV} transmission: the effect of contact
  patterns}, Mathematical Biosciences, 92 (1988), pp.~119--199.

\bibitem{zbMATH02582630}
{\sc W.~O. {Kermack} and A.~G. {McKendrick}}, {\em {A contribution to the
  mathematical theory of epidemics}}, {Proc. R. Soc. Lond., Ser. A}, 115
  (1927), pp.~700--721.

\bibitem{Kucharski_2020}
{\sc A.~J. Kucharski, P.~Klepac, A.~Conlan, S.~M. Kissler, M.~Tang, H.~Fry,
  J.~Gog, and J.~Edmunds}, {\em Effectiveness of isolation, testing, contact
  tracing and physical distancing on reducing transmission of {SARS-CoV-2} in
  different settings}, medRxiv Doi:
  \href{http://dx.doi.org/10.1101/2020.04.23.20077024}{10.1101/2020.04.23.20077024},
   (2020).

\bibitem{Lachiany_2016}
{\sc M.~Lachiany and Y.~Louzoun}, {\em Effects of distribution of infection
  rate on epidemic models}, Physical Review E, 94 (2016).

\bibitem{laguzet_individual_2015}
{\sc L.~Laguzet and G.~Turinici}, {\em Individual {Vaccination} as {Nash}
  {Equilibrium} in a {SIR} {Model} with {Application} to the 2009--2010
  {Influenza} {A} ({H1N1}) {Epidemic} in {France}}, Bulletin of Mathematical
  Biology, 77 (2015), pp.~1955--1984.

\bibitem{landau2013statistical}
{\sc L.~Landau and E.~Lifshitz}, {\em Statistical Physics: Volume 5},
  no.~vol.~5 in 3rd edition, revised and enlarged, Elsevier Science, 1980.

\bibitem{lin2020analysis}
{\sc X.~Lin}, {\em Analysis of 25,000 lab-confirmed {C}ovid-19 cases in
  {W}uhan: Epidemiological characteristics and non-pharmaceutical intervention
  effects}.
\newblock Department of Biostatistics and Department of Statistics, Harvard
  University and Broad Institute,
  \href{https://cdn1.sph.harvard.edu/wp-content/uploads/sites/21/2020/03/COVID-19-03-16-2020-Lin.pdf}{https://cdn1.sph.harvard.edu/wp-content/uploads/sites/21/2020/03/COVID-19-03-16-2020-Lin.pdf},
  2020.

\bibitem{liu2020covid}
{\sc Z.~Liu, P.~Magal, O.~Seydi, and G.~Webb}, {\em A covid-19 epidemic model
  with latency period}, Infectious Disease Modelling, 5 (2020), pp.~323--337.

\bibitem{Liu_2020}
{\sc Z.~Liu, P.~Magal, O.~Seydi, and G.~Webb}, {\em Predicting the cumulative
  number of cases for the {C}ovid-19 epidemic in {C}hina from early data},
  Mathematical Biosciences and Engineering, 17 (2020), pp.~3040--3051.

\bibitem{LiuMagal_2020}
\leavevmode\vrule height 2pt depth -1.6pt width 23pt, {\em Understanding
  unreported cases in the {C}ovid-19 epidemic outbreak in {W}uhan, {C}hina, and
  the importance of major public health interventions}, Biology, 9 (2020),
  p.~50.

\bibitem{magal2020.03.21.20040154}
{\sc P.~Magal and G.~Webb}, {\em Predicting the number of reported and
  unreported cases for the {C}ovid-19 epidemic in {S}outh {K}orea, {I}taly,
  {F}rance and {G}ermany}, medRxiv Doi:
  \href{https://www.medrxiv.org/content/early/2020/03/24/2020.03.21.20040154}{10.1101/2020.03.21.20040154},
   (2020).

\bibitem{MR3409181}
{\sc M.~Martcheva}, {\em An introduction to mathematical epidemiology}, vol.~61
  of Texts in Applied Mathematics, Springer, New York, 2015.

\bibitem{May_1984}
{\sc R.~M. May and R.~M. Anderson}, {\em Spatial heterogeneity and the design
  of immunization programs}, Mathematical Biosciences, 72 (1984), pp.~83--111.

\bibitem{Prem_2020}
{\sc K.~Prem, Y.~Liu, T.~W. Russell, A.~J. Kucharski, R.~M. Eggo, N.~Davies,
  M.~Jit, P.~Klepac, S.~Flasche, S.~Clifford, and et~al.}, {\em The effect of
  control strategies to reduce social mixing on outcomes of the {COVID-19}
  epidemic in {W}uhan, {C}hina: a modelling study}, The Lancet Public Health, 5
  (2020), pp.~e261--e270.

\bibitem{Riou_2020}
{\sc J.~Riou and C.~L. Althaus}, {\em Pattern of early human-to-human
  transmission of {W}uhan 2019 novel coronavirus {(2019-nCoV)}, {D}ecember 2019
  to {J}anuary 2020}, Eurosurveillance, 25 (2020).

\bibitem{Roques_2020}
{\sc L.~Roques, E.~K. Klein, J.~Papa{\"\i}x, A.~Sar, and S.~Soubeyrand}, {\em
  Impact of lockdown on the epidemic dynamics of {COVID-19} in {F}rance},
  Frontiers in Medicine, 7 (2020).

\bibitem{EHESP}
{\sc J.~Roux, C.~Massonnaud, and P.~Cr{\'e}pey}, {\em Covid-19: One-month
  impact of the french lockdown on the epidemic burden}.
\newblock Publications de l'{\'e}quipe REPERES (Recherche en
  Pharmaco-{\'e}pid{\'e}miologie et recours aux soins, UPRES EA-7449).
\newblock
  \href{https://www.ea-reperes.com/wp-content/uploads/2020/04/ImpactConfinement-EHESP-20200322v1.pdf}{https://www.ea-reperes.com/wp-content/uploads/2020/04/ImpactConfinement-EHESP-20200322v1.pdf}.

\bibitem{Salje_2020}
{\sc H.~Salje, C.~Tran~Kiem, N.~Lefrancq, N.~Courtejoie, P.~Bosetti,
  J.~Paireau, A.~Andronico, N.~Hoz{\'e}, J.~Richet, C.-L. Dubost, and et~al.},
  {\em Estimating the burden of {SARS-CoV-2} in {F}rance}, Science,  (2020),
  pp.~eabc3517, Doi:
  \href{https:doi.org/10.1126/science.abc3517}{10.1126/science.abc3517}.

\bibitem{SPF}
{\sc {Sant{\'e} Publique France}}, {\em French national public health agency}.
\newblock
  \href{https://dashboard.covid19.data.gouv.fr}{https://dashboard.covid19.data.gouv.fr},
  2020.

\bibitem{stanley1971introduction}
{\sc H.~Stanley}, {\em Introduction to Phase Transitions and Critical
  Phenomena}, The International Series of Monographs on Physics Oxford
  University Press Inc., Oxford and New York, 1971.

\bibitem{van_den_Driessche_2002}
{\sc P.~van~den Driessche and J.~Watmough}, {\em Reproduction numbers and
  sub-threshold endemic equilibria for compartmental models of disease
  transmission}, Mathematical Biosciences, 180 (2002), pp.~29--48.

\end{thebibliography}
\end{document}